\documentclass[11pt, a4paper, oneside,reqno]{amsart}

\usepackage{style}
\usepackage[square,numbers]{natbib}
\usepackage{amsmath}
\usepackage{amssymb}
\usepackage{mathtools}
\usepackage{bm}
\usepackage{comment}
\usepackage{booktabs}

\usepackage{tikz}
\usepackage{pgfplots}
\usetikzlibrary{calc}
\pgfplotsset{compat=1.18}

\usepackage[textsize=tiny]{todonotes}

\usepackage[linesnumbered,ruled]{algorithm2e}
\usepackage{graphicx}
\usepackage{subcaption}

\usepackage{amsthm}
\newtheorem{theorem}{Theorem}[section]
\newtheorem{defn}[theorem]{Definition}
\newtheorem{assum}[theorem]{Assumption}
\newtheorem{prop}[theorem]{Proposition}
\newtheorem{cor}[theorem]{Corollary}
\newtheorem{lem}[theorem]{Lemma}
\newtheorem{rem}[theorem]{Remark}
\newtheorem{exmp}[theorem]{Example}

\newtheorem*{claim*}{Claim}

\newtheorem*{pf}{Proof}

\graphicspath{{Fig/}}

\allowdisplaybreaks

\title[]{Learning in Inverse Games: \\ Tractable Training with Probabilistic Guarantees}

\author[]{Arghya Mallick, Reza Rahimi Baghbadorani, Peyman Mohajerin Esfahani, and Sergio Grammatico }
\thanks{An article based on Section \ref{dg_game_section} of this work is accepted for oral presentation at the $65$th IEEE Conference on Decision and Control (CDC), 2026, in Hawaii, USA. This work was partly funded by the ERC project ARGON (grant no. 101231127).}
\thanks{\textcolor{black}{Arghya Mallick and Sergio Grammatico are with Delft University of Technology; e-mails: \href{mailto:A.Mallick@tudelft.nl}{\texttt{\{A.Mallick,s.grammatico\}@tudelft.nl}}. Reza Rahimi Baghbadorani is with Erasmus University; e-mail: \href{mailto:rahimibaghbadorani@rsm.nl}{\texttt{rahimibaghbadorani@rsm.nl}}. Peyman Mohajerin Esfahani is with University of Toronto; e-mail: \href{mailto:P.MohajerinEsfahani@tudelft.nl}{\texttt{P.MohajerinEsfahani@utoronto.ca}}.}}

\begin{document}

\begin{abstract}
Inverse game theory seeks to learn agents' unknown objectives from observed equilibrium behavior. Existing residual-based approaches can lead to non-convex problems and need not ensure strong monotonicity of the learned game, limiting reliable equilibrium prediction. We develop a tractable convex framework for learning static and dynamic non-cooperative games using first-order and Nikaido-Isoda (NI) loss. Specialized operator parameterizations, including a Helmholtz--Hodge decomposition in a reproducing kernel Hilbert space (RKHS), enforce strong monotonicity for quadratic, nonparametric, non-quadratic, and linear-quadratic dynamic games. We establish finite-sample out-of-sample prediction guarantees, using Rademacher complexity to match existing bounds for the non-quadratic setting. For noisy sequential data, we develop a robust receding-horizon learning scheme whose adaptive regularization admits a maximum a posteriori (MAP) interpretation and is computed using randomized trace estimation. Numerical experiments, including a stylized autonomous-vehicle collision-avoidance application, demonstrate predictive accuracy and robustness to measurement noise.
\end{abstract}

\maketitle

\section{Introduction}
\label{sec:intro} In multi-agent non-cooperative environments, self-interested entities optimize individual utilities that depend on the joint actions of their peers. This strategic interaction is naturally formalized as a non-cooperative game. Across diverse applications, such as network resource allocation \cite{johari2005efficiency}, peer-to-peer energy trading in power systems~\cite{belgioioso2022operationally}, and autonomous driving \cite{peters2023online}, these interactions are modeled as games wherein the emergent operational behavior coincides with a Nash equilibrium. From the perspective of a central system operator, anticipating these equilibrium profiles is critical for ensuring operational efficiency and system reliability. This raises a fundamental system-theoretic question: can a designer reliably infer the underlying structure of a game from observations of its equilibria? More specifically, can the private objective functions of the agents be learned? While inverse optimization theory \cite{mohajerin2018data,zattoni2025learning} affirmatively resolves this for single-agent systems, \textit{inverse game theory} extends this paradigm to multi-agent settings. It establishes a rigorous mathematical framework to learn the unknown objective functions of self-interested agents using historical datasets of exogenous signals and their corresponding Nash equilibrium actions.

To illustrate the practical utility of inverse game theory, consider network resource allocation \cite{johari2005efficiency} within cloud computing infrastructures. The system operator is the cloud provider, while the competing agents comprise bandwidth-intensive services, such as video transcoders and database synchronizers. These agents seek to maximize their individual data throughput across a shared, capacity-constrained network, thereby inducing a non-cooperative resource-sharing game. By continuously monitoring exogenous network signals (e.g., background traffic loads) alongside the resulting equilibrium bandwidth allocations, the provider can utilize inverse game theory to learn the hidden utility functions of these services. Consequently, when confronted with a disruptive future state, such as an anticipated surge in prime-time traffic, the provider can accurately forecast the emergent Nash equilibrium actions. This predictive capability enables the proactive implementation of traffic-shaping measures to mitigate congestion and avert probable network outages. 

The problem of inferring objective functions from observed behavior traces its roots to single-agent inverse optimal control and inverse reinforcement learning (IRL) \cite{kalman1964linear,ng2000algorithms}. In continuous control domains, bilevel optimization frameworks are widely adopted. These approaches typically embed the Karush-Kuhn-Tucker (KKT) optimality conditions of the forward problem as constraints \cite{mombaur2010human} or directly minimize the condition residuals evaluated at the demonstrated trajectories \cite{englert2017inverse}. While extensions to multi-agent IRL exist, they largely assume potential game structures and do not generalize to arbitrary non-cooperative settings. To address this, recent literature has formalized inverse dynamic games \cite{awasthi2020inverse,li2025inverse,yu2022inverse}. A prevalent methodology extends single-agent residual minimization by penalizing the violation of first-order necessary conditions for open-loop Nash equilibria \cite{awasthi2020inverse}. Beyond resulting in highly non-convex optimization problems~\cite{li2025inverse}, an important limitation of these methods is the absence of inherent guarantees that the learned parameters will yield a monotone game during forward prediction. Moreover, the methodology in~\cite{yu2022inverse} is limited to an unconstrained two-player dynamic game. In a foundational contribution aligned with inverse game theory, the authors of \cite{bertsimas2015data} proposed a tractable convex formulation for inverse variational inequalities, offering both parametric and nonparametric approaches to estimate the pseudogradient operator. However, their methodology does not enforce monotonicity on the learned pseudogradient operator, a critical property required for the convergence of standard Nash equilibrium-seeking algorithms \cite{salehisadaghiani2016distributed}. Overcoming limitations of existing approaches \cite{awasthi2020inverse,bertsimas2015data,li2025inverse,yu2022inverse}, we propose a tractable inverse game formulation equipped with formal statistical guarantees for predicting the Nash equilibria of the learned game. Specifically, our main contributions are as follows. 
\begin{enumerate}[leftmargin=1.5em]
\renewcommand{\labelenumi}{(\roman{enumi})}
    \item \textbf{Tractable convex formulations:}  We develop tractable convex formulations for three classes of inverse games: 
    \begin{enumerate}[leftmargin=0.8em]
        \item \textit{Exact reformulation with NI loss.} For inverse linear-quadratic games, we introduce the \textit{Nikaido--Isoda (NI) loss} and derive an exact convex reformulation with LMI constraints (Proposition~\ref{prop_qg}). 
        
        \item \textit{Non-quadratic monotone games.} For inverse non-quadratic monotone games, we parameterize the game operator in an RKHS using the Helmholtz--Hodge decomposition and reduce the resulting infinite-dimensional inverse problem to a finite-dimensional convex program with second-order cone constraints (Theorem~\ref{kernel_formulation}). 
        
        \item \textit{Noisy data in dynamic games.} For inverse linear-quadratic dynamic games, we enforce strong monotonicity via LMIs and mitigate noise propagation in sequential data through a Tikhonov-regularized receding-horizon formulation. With an appropriate dynamic update of the regularization penalty, its solution coincides with the Bayesian MAP estimate of the noise-free parameter (Proposition \ref{prop:MAP-equivalence}). 
    \end{enumerate}
    \item \textbf{Probabilistic guarantees:} For both inverse quadratic and non-quadratic monotone games, we establish finite-sample guarantees on the out-of-sample predictive accuracy of the learned forward game (Theorems \ref{qg_theorem} and \ref{nqg_theorem}). For the latter, exploiting strong monotonicity and Rademacher complexity yields a statistical bound matching the best-known guarantee for the inverse variational inequality framework \cite[Theorem~8]{bertsimas2015data}.
\end{enumerate}
\subsection*{Notation}
The set $\mathbb{R}$ and $\mathbb{R}^n$ denote the reals and $n$-dimensional Euclidean space; $\mathbb{S}^n_{\succeq0}$ ($\mathbb{S}^n_{\succ0}$) the symmetric positive (semi-)definite matrices. We use $\|\cdot\|$, $\|\cdot\|_{\rm Fr}$, and $\|\cdot\|_{\rm op}$ for the Euclidean, Frobenius, and operator norms, respectively; $\langle\cdot,\cdot\rangle$ and $\mathrm{Tr}(\cdot)$ denote the inner product and trace. The operators $\mathrm{col}(\cdot)$ and $\mathrm{diag}(\cdot)$ denote column stacking and block diagonalization. For twice differentiable $f$, $\nabla_x f$ and $\nabla_x^2f$ denote its gradient and Hessian. Let $\mathcal I:=\{1,\ldots,N\}$, $u=\mathrm{col}((u^i)_{i\in\mathcal I})$, and $u^{-i}:=\mathrm{col}((u^j)_{j\in\mathcal I\setminus\{i\}})$. Finally, $\mathbb P[A]$ and $\mathbb E[\chi]$ denote probability and expectation, and $[T]=\{1,\ldots,T\}$ for $T\in\mathbb N$. The variational inequality problem, denoted by VI($F,\mathcal{U}$), is to find $u^* \in \mathcal{U}$ such that $F(u^*)^{\top}(u-u^*) \geq 0, \; \forall u \in \mathcal{U}$.
\section{Problem Description}
Let us first formalize the forward game setting. We denote the exogenous signal by $s \in \mathbb{S}$, where $\mathbb{S}$ is the signal space. Consider $N$ agents, indexed within the set $\mathcal{I}:= \{1,2, \dots, N\}$, engaged in the following non-cooperative forward game:
\begin{equation} \label{for_game}
    \forall i \in \mathcal{I}: \begin{cases}
         \min\limits_{u^i} &    J^i(u^i,u^{-i} \mid s), \\
        \:\text{s.t.}  &u^i \in \mathcal{U}^i(s),
    \end{cases}
\end{equation}
where $u^i \in \mathbb{R}^{n_i} $ is the decision vector of agent $i \in \mathcal{I}$, $u^{-i}= \mathrm{col}((u^j)_{j\in \mathcal{I}\setminus\{i\}})$ denotes the concatenated decision variables of all agents except $i$, $\mathcal{U}^i(s) \subseteq \mathbb{R}^{n_i}$ is the local feasible set of agent $i$ parameterized by signal $s$, and the cost function of agent $i$ is denoted by $J^i: \mathbb{S}\times \prod_{i \in \mathcal{I}}( \bigcup_{s \in \mathbb{S}}\mathcal{U}^i(s)) \rightarrow \mathbb{R}$, which depends on the decision variables of the other agents. Furthermore, the collective feasible set of the game, parameterized by the signal $s$, is denoted by $\mathcal{U}(s):= \prod_{i \in \mathcal{I}}\mathcal{U}^i(s)$. Given a signal $s \in \mathbb{S}$, the realized solution of \eqref{for_game} is assumed to be in a Nash equilibrium, defined as follows. 

\begin{defn}[Nash equilibrium] \label{NE defn}
    Given a signal $\tilde{s} \in \mathbb{S}$, a joint strategy profile $\tilde{u}:= \mathrm{col}((\tilde{u}^{i})_{i \in \mathcal{I}})$ is a Nash equilibrium of the game in \eqref{for_game} if $\tilde{u} \in \mathcal{U}(\tilde{s}) $, and for each $i \in \mathcal{I}$
    \begin{equation}
        J^i(\tilde{u}^i,\tilde{u}^{-i}\mid \tilde{s}) \leq J^i(u^i,\tilde{u}^{-i}\mid \tilde{s})
    \end{equation}
    for any $u^i \in \mathcal{U}^i(\tilde{s})$. 
\end{defn}
As a sequence of $T$ forward games in \eqref{for_game} unfolds, we let the observer collect a dataset of signal and Nash equilibrium pairs, denoted by $\mathcal{D}:= \{(\tilde{s}_t,\tilde{u}_t)\}_{t=1}^T$. Strictly speaking, the aim of the inverse game in our setting is to learn the cost functions $J^i$ for all $i \in \mathcal{I}$. Our solution methodologies follow a supervised learning approach instead of classic system identification methods, which implies that our methods do not intend to recover/identify the exact parameters of the objective function. Instead, we learn the forward game so that, given a new signal, it predicts the true Nash equilibrium with reasonable accuracy. To achieve this, we first search within a restricted space of functions to model $J^i$. Specifically, we aim to parameterize $J^i$ with a finite number of parameters and then develop a tractable formulation for learning them. To this end, we first impose the following standalone assumptions that hold throughout the article. 
\begin{assum}[Data consistency] \label{feas_ass} 
On the feasibility set of \eqref{for_game}, and dataset generation process, we assume the following:
    \begin{enumerate}\renewcommand{\labelenumi}{(\roman{enumi})}
        \item \emph{i.i.d samples.}  The training and testing signal data $(\tilde{s}_t,\tilde{u}_t)_{t\in \mathbb{N}}$ are i.i.d. realizations of a random variable $(\mathbf{s},\mathbf{u})$ drawn from an unknown probability measure $\mathbb{P}$.
        \item \emph{Geometry of feasibility set.} Given an arbitrary signal $s \in \mathbb{S}$, the local feasible set for the forward game \eqref{for_game} is characterized by bounded linear inequalities as $\mathcal{U}^i(s)= \{u^i \in \mathbb{R}^{n_i} \mid G_i(s)u^i \leq h_i(s)\}, \: \forall i\in \mathcal{I}$, where $G_i(s)\in \mathbb{R}^{m \times n_i}$ and $h_i(s)\in \mathbb{R}^{m}$.
        \item \emph{Observer's data.} The set $\mathcal{U}^i(s)$ is non-empty while the parameters $(G_i(s),h_i(s))_{i \in \mathcal{I}}$ are known to the observer.  
    \end{enumerate}
\end{assum}
The first assumption is critical for providing any statistical guarantees on its predictive performance. The last two assumptions are required for developing a tractable convex formulation for the training process. Although one could replace the polytopic feasibility set with any other convex feasibility set, we consider it for its relatively easier dual reformulation advantage.

We first study linear-quadratic forward games, deriving a convex training formulation with a strong performance guarantee. Section \ref{non_quad_sec} extends the framework to non-quadratic games with broader objective classes, enforcing strong monotonicity and providing an out-of-sample guarantee despite the infinite-dimensional function class. Section \ref{dg_game_section} develops a tractable, robust training formulation for linear-quadratic dynamic games with noisy data, while Section \ref{numerical_sec} numerically validates the main theoretical results. 
\section{Inverse Formulation for Linear Quadratic Games}
In the class of linear-quadratic games, the cost functions are defined as follows: 
\begin{equation} \label{obj_qg}
    J^i(u^i,u^{-i}\mid s):= \frac{1}{2} (u^i)^{\top} \left(R^i \sum_{j\in \mathcal{I} } u^j\right) + r^i(s)^{\top} u^i, \:\: \forall i \in \mathcal{I},
\end{equation}
where $r^i(s):= \Phi^i_r(s)r^i$, $R^i \in \mathbb{S}^{n_i}_{\succeq 0}$, and $r^i \in \mathbb{R}^{n_i}$. Here, $R^i$ and $r^i$ are the unknown parameters while $\Phi^i_r:\mathbb{S}\rightarrow \mathbb{R}^{n_i\times n_i}$ is a known feature matrix map. Without loss of generality, we consider $n_i=n_j$ for any $i,j\in \mathcal{I}$. To learn $J^i$, we parameterize it with $\theta$ as $J^i_{\theta}$ such that $(R^i_{\theta},r^i_{\theta})$ belong to the following hypothesis space:
\begin{align}\label{omega_qg}
    \Omega:=  \{R^i_{\theta} \in \mathbb{S}^{n_i}, r^i_{\theta}\in \mathbb{R}^{n_i} \mid \lambda_{l} I_{n_i} \preceq R^i_{\theta} \preceq \lambda_{u} I_{n_i}, \forall i \in \mathcal{I} \},
\end{align}
where the constants $0<\lambda_{l} < \lambda_{u}$ are chosen by user. Thus, we denote $r^i_{\theta}(s)= \Phi^i_r(s)r^i_\theta $. Furthermore, to restrict~\eqref{obj_qg} within the class of strongly monotone games, we need to ensure that the Jacobian of the pseudogradient operator is positive definite. Let $F_{\theta}(s,u):= \mathrm{col}(\nabla_{u^i} J^i_\theta(u^i,u^{-i} \mid s))_{i\in \mathcal{I}}$ be the pseudogradient operator. The Jacobian evaluates to $\nabla_u\big(F_\theta(s,u)\big)= \frac{1}{2}\big(\mathbf{R}(\theta)+ \mathbf{R}(\theta)^\top\big)$, where the matrix $\mathbf{R}(\theta) =\big[\tfrac{1+\delta_{ij}}{2}R_\theta^i\big]_{i,j=1}^N$ with $\delta_{ij}$ being the Kronecker delta, i.e., $\delta_{ij}=1$ if $i=j$, otherwise it is set as zero. We will use the Jacobian as a constraint later in this section. At the heart of any supervised learning methodology, the choice of a suitable loss function is pivotal. To optimize over $\theta$, we therefore introduce the following loss formulations.  

\subsection{Loss functions}
The \textit{Nikaido-Isoda function} \cite[Equation $(1)$]{nikaido1955note} plays a fundamental role in game theory, originally serving as a tool to establish existence results for the Nash equilibrium problems. We demonstrate how a meaningful loss function can be derived from it. First, we define
\begin{defn}[Nikaido-Isoda function \cite{nikaido1955note}]
    Given a signal $s \in \mathbb{S}$, the \textit{Nikaido-Isoda} (NI) function of the Nash equilibrium problem in \eqref{for_game} is
    \begin{equation} \label{NI_map}
        \Psi_{\theta}(x,y \mid s):= \sum_{i=1}^N \Big[J^i_{\theta}(x^i,x^{-i} \mid s)-J^i_{\theta}(y^i,x^{-i} \mid s)\Big],
    \end{equation}
    where $x=\mathrm{col}((x^i)_{i \in \mathcal{I}})$, and $y=\mathrm{col}((y^i)_{i \in \mathcal{I}})$ denotes joint strategies of the players. 
\end{defn}
The mapping in \eqref{NI_map} evaluates the cumulative improvement in the objective functions when each agent $i$ unilaterally changes their action from $x^i$ to $y^i$ while the other agents' actions remain fixed at $x^{-i}$. Furthermore, we introduce $L_{\theta}(s,x):= \sup_{y \in \mathcal{U}(s)} \: \Psi_{\theta}(x,y \mid s),$ and invoke~\cite[Theorem $3.2$]{facchinei2010generalized} to state the following properties: 
\begin{enumerate}\renewcommand{\labelenumi}{(\roman{enumi})}
    \item $L_{\theta}(s,x) \geq 0$ for all $x \in \mathcal{U}(s)$, and
    \item $L_{\theta}(s,\tilde{x}) = 0$ for $\tilde{x}\in \mathcal{U}(s)$ if and only if $\tilde{x}$ is a Nash equilibrium solution to the problem \eqref{for_game}.
\end{enumerate}
Leveraging the dataset of Nash equilibrium-signal pairs in $\mathcal{D}$, we evaluate $L_{\theta}$ at these points and propose the \textit{Nikaido-Isoda} (NI) loss function
\begin{align} \label{NI_loss}
    L^{\mathrm{NI}}_{\theta}(\tilde{s},\tilde{u}) :=\sum_{i=1}^{N} \left[ J^i_{\theta}(\tilde{u}^i,\tilde{u}^{-i} \mid \tilde{s})- \inf_{y^i \in \mathcal{U}^i(\tilde{s})} J^i_{\theta}(y^i,\tilde{u}^{-i} \mid \tilde{s}) \right],
\end{align}
where $\tilde{u}=(\tilde{u}^i)_{i \in \mathcal{I}}$. The suitability of \eqref{NI_loss} as a loss function stems directly from the implicit connection between the Nash equilibrium Definition~\ref{NE defn} and the function $L_{\theta}(s,x)$. Both encapsulate the impossibility of improving the objective value unilaterally. Since our goal is to learn the players' objective functions to accurately predict a Nash equilibrium given $s$, minimizing \eqref{NI_loss} naturally drives the parameters toward configurations that satisfy the equilibrium conditions. Interestingly, a parallel concept known as \emph{suboptimality loss} was proposed in \cite{mohajerin2018data} for inverse optimization problems, closely mirroring the proposed NI loss function in \eqref{NI_loss}. Alongside NI loss, authors in \cite{bertsimas2015data} explored a variational inequality-based approach, leading to the following alternative metric: 
\begin{defn}[First-order loss]\label{fo_loss_defn}
    Let $F_{\theta}(s,u):= \mathrm{col}(\nabla_{u^i} J^i_\theta(u^i,u^{-i} \mid s))_{i\in \mathcal{I}}$ be the pseudogradient operator of the forward game in \eqref{for_game}. The first-order (FO) loss is given by
    \begin{align} \label{fo_loss}
        L^{\mathrm{FO}}_{\theta}(\tilde{s},\tilde{u}):= \: \sup_{y \in \mathcal{U}(\tilde{s})} \: \langle F_{\theta}(\tilde{s},\tilde{u}), \tilde{u}-y \rangle.
    \end{align}
\end{defn}
Our framework favors optimizing the NI loss function over the FO loss due to the following dominance property. 
\begin{lem}[Dominance of NI loss]\label{loss_dominance}
    Let $L^{\mathrm{NI}}_{\theta}$ and $L^{\mathrm{FO}}_{\theta}$ denote the Nikaido-Isoda and first-order losses, respectively defined in \eqref{NI_loss} and \eqref{fo_loss}. If $F_{\theta}$ is a monotone pseudogradient operator, then
    \begin{equation}
        L^{\mathrm{NI}}_{\theta}(s,u) \leq L^{\mathrm{FO}}_{\theta}(s,u) \quad \forall s \in \mathbb{S}, \: u \in \mathcal{U}(s).
    \end{equation}
\end{lem}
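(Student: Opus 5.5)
The plan is to compare the two losses term by term, using the first-order characterization of convexity. The link is that monotonicity of $F_\theta$ forces each $J^i_\theta(\cdot,u^{-i}\mid s)$ to be convex in the player's own variable.

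\textbf{Step 1 (separability of the NI loss).} Since $\mathcal{U}(s)=\prod_{i\in\mathcal I}\mathcal{U}^i(s)$ and the $i$-th summand in \eqref{NI_loss} depends only on $y^i$, the sum of infima equals an infimum over the joint profile $y$. This gives
\begin{equation*}
L^{\mathrm{NI}}_\theta(s,u)=\sup_{y\in\mathcal U(s)}\sum_{i=1}^N\Big[J^i_\theta(u^i,u^{-i}\mid s)-J^i_\theta(y^i,u^{-i}\mid s)\Big],
\end{equation*}
which is exactly $L_\theta(s,u)$. The FO loss \eqref{fo_loss} is likewise a supremum over $y\in\mathcal U(s)$. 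It therefore suffices to show, for every fixed $y\in\mathcal U(s)$,
\begin{equation*}
\sum_i\big[J^i_\theta(u^i,u^{-i}\mid s)-J^i_\theta(y^i,u^{-i}\mid s)\big]\le\langle F_\theta(s,u),u-y\rangle,
\end{equation*}
and then take the supremum on both sides.

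\textbf{Step 2 (monotonicity implies player-wise convexity).} This is the step where the hypothesis enters, and it is the main point to verify. Fix $i$ and $u^{-i}$, and take two profiles $v=(v^i,u^{-i})$ and $w=(w^i,u^{-i})$ that differ only in block $i$. Monotonicity gives $\langle F_\theta(s,v)-F_\theta(s,w),v-w\rangle\ge0$. Because $v-w$ is zero outside block $i$, this reduces to
\begin{equation*}
\big(\nabla_{u^i}J^i_\theta(v^i,u^{-i}\mid s)-\nabla_{u^i}J^i_\theta(w^i,u^{-i}\mid s)\big)^\top(v^i-w^i)\ge0 .
\end{equation*}
So the partial gradient map $z\mapsto\nabla_{u^i}J^i_\theta(z,u^{-i}\mid s)$ is monotone. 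For a differentiable function this is equivalent to convexity of $J^i_\theta(\cdot,u^{-i}\mid s)$, which can be seen by integrating the gradient along the segment between $w^i$ and $v^i$. In the linear-quadratic case it is immediate anyway, since the diagonal blocks $R^i_\theta$ are positive semidefinite.

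\textbf{Step 3 (gradient inequality and summation).} By convexity, for each $i$,
\begin{equation*}
J^i_\theta(y^i,u^{-i}\mid s)\ge J^i_\theta(u^i,u^{-i}\mid s)+\nabla_{u^i}J^i_\theta(u\mid s)^\top(y^i-u^i).
\end{equation*}
Rearranging gives $J^i_\theta(u)-J^i_\theta(y^i,u^{-i})\le\nabla_{u^i}J^i_\theta(u)^\top(u^i-y^i)$. Summing over $i$, the right-hand side becomes $\langle F_\theta(s,u),u-y\rangle$ by the definition of $F_\theta$ as the stacked partial gradients. This is the inequality required in Step 1.

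Taking the supremum over $y\in\mathcal U(s)$ on both sides then yields $L^{\mathrm{NI}}_\theta(s,u)\le L^{\mathrm{FO}}_\theta(s,u)$ for all $s\in\mathbb S$ and $u\in\mathcal U(s)$.
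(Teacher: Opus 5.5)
Your proof is correct and follows the paper's argument. Monotonicity gives convexity of each $J^i_\theta(\cdot,u^{-i}\mid s)$, the first-order convexity inequality bounds each NI term by the matching FO term, and the product structure of $\mathcal{U}(s)$ handles the supremum; the paper takes per-agent suprema and then sums, while you sum first and take the joint supremum, which is equivalent. The one small difference is how you get player-wise convexity: you restrict the monotonicity inequality to unilateral deviations, whereas the paper uses positive semidefiniteness of the diagonal Hessian blocks of the Jacobian, so your version needs only differentiability rather than twice differentiability.
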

The proof is provided in Appendix \ref{loss_lemma}. Having established the theoretical background for the loss formulation, we now present a tractable formulation of an inverse game problem. 

\subsection{Tractable formulation}
A standard methodology in supervised learning is to formulate an empirical risk minimization problem. Given the training dataset $\mathcal{D}$, the inverse game reduces to minimizing the empirical loss
\begin{subequations} \label{ig_main}
    \begin{align}
        \min_{\theta \in \Omega}\quad & \sum_{t=1}^{T} L^{\mathrm{NI}}_{\theta}(\tilde{s}_t,\tilde{u}_t), \\
    \mathrm{s.t.} \quad & \frac{1}{2}\big(\mathbf{R}(\theta)+ \mathbf{R}(\theta)^\top\big) \succeq \mu I \label{sm_cons}  
    \end{align}
\end{subequations}
where $\Omega$ is defined in~\eqref{omega_qg}, $\mu>0$ is the user-defined strong monotonicity parameter of the pseudogradient operator constraint in~\eqref{sm_cons}, and $L^{\mathrm{NI}}_\theta$ is defined in~\eqref{NI_loss}.  Due to the inner minimization embedded in the NI loss formulation, tractability of \eqref{ig_main} is not obvious despite the quadratic structure of $J^i_{\theta}$ in \eqref{obj_qg}. The following proposition establishes an exact convex reformulation of \eqref{ig_main}. 

\begin{prop}[Exact reformulation]\label{prop_qg}
    Consider the optimization problem \eqref{ig_main} where the candidate function $J^i_{\theta}$ admits the structure in \eqref{obj_qg}, and let Assumption \ref{feas_ass} hold. The empirical NI loss minimization problem in \eqref{ig_main} is equivalent to the following semidefinite program: 
    \begin{subequations} \label{main_qg_prob}
        \begin{align}
            \min_{\substack{\theta \in \Omega, \{\lambda_t, \\ \gamma_t\}_{t\in [T]}}} \quad & \sum_{t=1}^{T}\sum_{i=1}^{N} \Big[ \frac{1}{2}(\tilde{u}^i_t)^{\top} R^i_{\theta} \sum_{j=1}^N \tilde{u}^j_t + (r^i_{\theta}(\tilde{s}_t))^{\top} \tilde{u}^i_t + \nonumber \\ 
            & \gamma^i_t +\frac{1}{8}((\hat{u}^{-i}_t)^{\top} R^i_{\theta}\hat{u}^{-i}_t + 4(\hat{u}^{-i}_t)^{\top} r^i_{\theta}(\tilde{s}_t)) \Big] \\
            \mathrm{s.t.} \qquad & \begin{bmatrix}
                R^i_{\theta} & (r^i_{\theta}(\tilde{s}_t)+G_i(\tilde{s}_t)^{\top}\lambda^i_t) \\
                \star & 2\gamma^i_t-(\lambda^i_t)^{\top}\big(2h_i(\tilde{s}_t)+ G_i(\tilde{s}_t) \hat{u}^{-i}_t\big)
            \end{bmatrix} \succeq 0, \nonumber \\ \label{con_lmi}
            &\forall \; t \in [T],\: i \in \mathcal{I}, \\
            &\frac{1}{2}\big(\mathbf{R}(\theta)+ \mathbf{R}(\theta)^\top\big) \succeq \mu I \\
            &\lambda^i_t \geq 0, \: \forall \; t\in[T], \: i \in \mathcal{I}, \label{con_lamda}
        \end{align}
    \end{subequations}
    where $\hat{u}^{-i}_t := \sum_{j \in \mathcal{I}\setminus\{i\}} \tilde{u}^j_t$.
\end{prop}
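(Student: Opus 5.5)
The plan is to dualize the inner minimization in the NI loss \eqref{NI_loss} agent by agent and sample by sample. This turns the min–max structure of \eqref{ig_main} into a single joint minimization over $(\theta,\lambda_t,\gamma_t)$, and a Schur complement then gives the LMI \eqref{con_lmi}. The first part of the objective, $\sum_i J^i_\theta(\tilde u^i_t,\tilde u^{-i}_t\mid\tilde s_t)$, is already linear in $\theta$. Substituting \eqref{obj_qg}, it equals the first two terms of the objective in \eqref{main_qg_prob}. All the work is therefore in $-\inf_{y^i\in\mathcal U^i(\tilde s_t)}J^i_\theta(y^i,\tilde u^{-i}_t\mid\tilde s_t)$.

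\textbf{Step 1 (inner QP).} Since the coupling in \eqref{obj_qg} is through $\sum_j u^j$, with $u^i=y$ we get $J^i_\theta(y,\tilde u^{-i}_t\mid \tilde s_t)=\tfrac12 y^\top R^i_\theta y+\tfrac12 y^\top R^i_\theta\hat u^{-i}_t+r^i_\theta(\tilde s_t)^\top y$. Because $\theta\in\Omega$ forces $R^i_\theta\succeq\lambda_l I\succ0$, this is a strongly convex QP over the polytope $\{G_i y\le h_i\}$, which is nonempty and bounded by Assumption~\ref{feas_ass}. Linear constraints and feasibility give strong duality with dual attainment, with no Slater-type interior condition needed. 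So the inf equals $\max_{\lambda\ge0}\inf_y\big[\tfrac12 y^\top R y+y^\top(\tfrac12 R\hat u+r+G^\top\lambda)-\lambda^\top h\big]$ (indices suppressed). The inner minimizer is explicit, and the dual function is $-\tfrac12 c^\top R^{-1}c-\lambda^\top h$ with $c=\tfrac12R\hat u+r+G^\top\lambda$.

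\textbf{Step 2 (expanding and isolating the $\theta$-linear part).} Expanding $c^\top R^{-1}c$ gives $\tfrac14\hat u^\top R\hat u+\hat u^\top(r+G^\top\lambda)+(r+G^\top\lambda)^\top R^{-1}(r+G^\top\lambda)$. Hence
\[
-\inf_y J^i_\theta=\min_{\lambda\ge0}\Big[\tfrac18\hat u^\top R\hat u+\tfrac12\hat u^\top r+\tfrac12 (r+G^\top\lambda)^\top R^{-1}(r+G^\top\lambda)+\lambda^\top\big(h+\tfrac12 G\hat u\big)\Big].
\]
The first two terms are exactly $\tfrac18\big((\hat u^{-i}_t)^\top R^i_\theta\hat u^{-i}_t+4(\hat u^{-i}_t)^\top r^i_\theta(\tilde s_t)\big)$. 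I then introduce an epigraph variable $\gamma^i_t$ for the last two terms. The condition $\gamma\ge \tfrac12(r+G^\top\lambda)^\top R^{-1}(r+G^\top\lambda)+\lambda^\top(h+\tfrac12G\hat u)$ is equivalent to $2\gamma-\lambda^\top(2h+G\hat u)\ge (r+G^\top\lambda)^\top R^{-1}(r+G^\top\lambda)$. Since $R\succ0$, the Schur complement lemma turns this into the LMI \eqref{con_lmi}, which is jointly linear in $(R^i_\theta,r^i_\theta,\lambda^i_t,\gamma^i_t)$ because $r^i_\theta(\tilde s_t)=\Phi^i_r(\tilde s_t)r^i_\theta$.

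\textbf{Step 3 (merging the minimizations).} Summing over $i,t$, the problem \eqref{ig_main} becomes $\min_\theta\sum_{t,i}\min_{\lambda^i_t\ge0,\gamma^i_t}[\cdots]$. The inner problems are decoupled across $(i,t)$ and the constraints \eqref{sm_cons} and $\Omega$ involve only $\theta$, so the nested minimizations combine into one joint minimization. This yields \eqref{main_qg_prob} with the same optimal value and the same $\theta$-minimizers, and the result is an SDP since $\Omega$ is itself LMI-representable. The main point to be careful about is Step 1: justifying strong duality and dual attainment so that the $\max_\lambda$ becomes a $\min_\lambda$ after negation, and using $R^i_\theta\succ0$ (guaranteed by $\lambda_l>0$ in $\Omega$) so that $R^{-1}$ and the strict Schur complement are legitimate. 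The remaining steps are algebraic bookkeeping of the cross term $\hat u^\top(r+G^\top\lambda)$.
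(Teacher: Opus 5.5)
Your proposal is correct and follows the same route as the paper's proof. Both dualize each agent's inner QP over the polytope, minimize in closed form over $y^i$, introduce the epigraph variable $\gamma^i_t$, and apply a Schur complement using $R^i_\theta\succ 0$. Your explicit expansion of $c^\top R^{-1}c$ is cleaner than the paper's intermediate display and yields exactly the $\tfrac18\big((\hat u^{-i}_t)^\top R^i_\theta\hat u^{-i}_t+4(\hat u^{-i}_t)^\top r^i_\theta(\tilde s_t)\big)$ term and the LMI \eqref{con_lmi} as stated. You also justify strong duality with dual attainment, which the paper leaves implicit.
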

The proof is deferred to Appendix \ref{proof_prop_qg}. Although the number of LMIs in \eqref{main_qg_prob} scales with the dataset size $T$ and the number of agents $N$, modern interior-point solvers are highly efficient at solving large-scale semidefinite programs. We now establish performance guarantees for the predictive capabilities of this framework. 

\subsection{Probabilistic performance guarantees}\label{pg_qg}
In this section, we derive statistical guarantees for the empirical loss minimization problem in \eqref{ig_main}. Specifically, we formalize bounds on the out-of-sample predictive capability of the learned forward game. Let $\hat{\theta}$ denote the optimal solution to \eqref{ig_main}, and consider an unseen testing dataset $\mathcal{D}^*:= \{(s^*_t,u^*_t)\}_{t=1}^{T'}$. By solving the $\hat{\theta}$-parameterized forward game \eqref{for_game} using dataset $\mathcal{D}^*$, we retrieve the predicted Nash equilibria $(\hat{u}_t)_{t=1}^{T'}$. We define the mean prediction error as
\begin{equation} \label{perf_criteria}
    \mathbf{e}_{T} = \frac{1}{T'} \sum_{t=1}^{T'} \lVert u^*_t-\hat{u}_t \rVert,
\end{equation}
and analyze the asymptotic behavior of $\mathbf{e}_T$ as the number of training data points $T \rightarrow \infty$. Let $\eta_T= \max_{t\in [T]} L_{\hat{\theta}}^{\mathrm{FO}}(\tilde{s}_t,\tilde{u}_t)$ denote the maximum FO loss evaluated at the parameter $\hat{\theta}$. Furthermore, let $F_{\theta}(s,u):= \mathrm{col}(\nabla_{u^i} J^i_\theta(u^i,u^{-i} \mid s))_{i\in \mathcal{I}}$ be the pseudogradient operator. We then state the following result. 
\begin{theorem}[Prediction error bound]\label{qg_theorem}
    Assume the operator $F_{\theta}$ is strongly monotone with parameter $\zeta$, and that Assumption \ref{feas_ass} holds. Then, for any $\alpha\in(0,1)$, with probability at least $1- \beta(\alpha)$ with respect to the sampling of the dataset, 
    \begin{equation} \label{mpe_bound}
        \mathbb{P}\left[ \mathbf{e}_T \leq \sqrt{\frac{\eta_T}{\zeta}}  \;\right]\geq 1-\alpha,
    \end{equation}
    where $\beta(\alpha)= \sum^{\mathrm{dim}(\theta)}_{i=0} \binom{T}{i}\alpha^i (1-\alpha)^{T-i}$, and $\mathbf{e}_T$ is defined in \eqref{perf_criteria}. 
\end{theorem}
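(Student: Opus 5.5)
The proof combines a deterministic step from strong monotonicity with a scenario-approach argument. The deterministic step converts a small FO loss at a point into a small distance to the predicted equilibrium. The scenario step shows that, with high confidence over the training data, a fresh sample has FO loss at most $\eta_T$.

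\emph{Deterministic step.} Fix a signal $s$ and a point $u\in\mathcal{U}(s)$. Let $\hat u$ be the (unique, by strong monotonicity) Nash equilibrium of the $\hat\theta$-game, i.e.\ the solution of VI$(F_{\hat\theta}(s,\cdot),\mathcal{U}(s))$.
\begin{enumerate}
\item Strong monotonicity gives
\[
\zeta\|u-\hat u\|^2\le \langle F_{\hat\theta}(s,u)-F_{\hat\theta}(s,\hat u),\,u-\hat u\rangle .
\]
\item The VI condition at $\hat u$, tested with $y=u$, gives $\langle F_{\hat\theta}(s,\hat u),u-\hat u\rangle\ge 0$.
\item Combining the two, and using $\hat u\in\mathcal{U}(s)$ in the definition \eqref{fo_loss},
\[
\zeta\|u-\hat u\|^2\le \langle F_{\hat\theta}(s,u),u-\hat u\rangle\le \sup_{y\in\mathcal{U}(s)}\langle F_{\hat\theta}(s,u),u-y\rangle = L^{\mathrm{FO}}_{\hat\theta}(s,u).
\]
\end{enumerate}
Hence $\|u-\hat u\|\le\sqrt{L^{\mathrm{FO}}_{\hat\theta}(s,u)/\zeta}$. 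Consequently, if every test pair satisfies $L^{\mathrm{FO}}_{\hat\theta}(s^*_t,u^*_t)\le\eta_T$, then averaging yields $\mathbf{e}_T\le\sqrt{\eta_T/\zeta}$. More precisely, the statement should be read per test sample, $\|u^*-\hat u\|\le\sqrt{\eta_T/\zeta}$, or with $\mathbb{P}$ applied to the event for a fresh sample. I would phrase the bound for a single fresh pair $(\mathbf{s},\mathbf{u})$, which is what the scenario bound controls.

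\emph{Probabilistic step.} Reinterpret the training problem as a scenario program. Write the epigraph reformulation
\[
\min_{\theta\in\Omega,\;\eta}\;\eta \quad \text{s.t. } L^{\mathrm{FO}}_{\theta}(\tilde s_t,\tilde u_t)\le \eta \;\;\forall t,
\]
or more directly consider the sample-dependent level $\eta_T$ attained at $\hat\theta$. The goal is to bound the violation probability $V(\hat\theta,\eta_T):=\mathbb{P}[L^{\mathrm{FO}}_{\hat\theta}(\mathbf{s},\mathbf{u})>\eta_T]$ by $\alpha$ with confidence $1-\beta(\alpha)$. Each constraint $L^{\mathrm{FO}}_\theta(s,u)\le\eta$ is convex in $(\theta,\eta)$: for quadratic costs, $F_\theta$ is affine in $\theta$, so $L^{\mathrm{FO}}_\theta$ is a supremum of affine functions of $\theta$. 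The scenario theory of Campi--Garatti then gives $\mathbb{P}^T[V>\alpha]\le\sum_{i=0}^{d-1}\binom{T}{i}\alpha^i(1-\alpha)^{T-i}$, where $d=\dim(\theta)+1$ is the number of decision variables. This is exactly $\beta(\alpha)$ with the sum running to $\dim(\theta)$. Alternatively, I would use the wait-and-judge / compression viewpoint: the number of support constraints determining $(\hat\theta,\eta_T)$ is at most $\dim(\theta)+1$. The i.i.d.\ part of Assumption \ref{feas_ass} supplies the required independence.

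\emph{Main obstacle.} The delicate point is that $\hat\theta$ is produced by minimizing the \emph{NI} loss, not by the min-max FO scenario program. So $(\hat\theta,\eta_T)$ is not literally the optimizer of a convex scenario program, and the classical theorem does not apply verbatim. I would try first to invoke a compression-set argument (Campi--Garatti 2018, or Margellos et al.), which covers arbitrary decision maps. One needs a subset of at most $\dim(\theta)+1$ samples that reproduces $\eta_T$; this comes from the maximizer defining $\eta_T$ together with samples fixing $\hat\theta$. If that fails, I would fall back on a two-stage view: condition on $\hat\theta$ and treat $\eta_T$ as a scalar scenario program, which gives a cruder $\beta$. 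Nondegeneracy and uniqueness issues can be handled by a standard tie-break rule. Combining with the deterministic step concludes \eqref{mpe_bound}.
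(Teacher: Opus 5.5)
Your deterministic step is correct and is exactly how the paper uses Theorem~\ref{mono_theorem}: strong monotonicity plus the VI at $\hat u$ give $\zeta\|u-\hat u\|^2\le L^{\mathrm{FO}}_{\hat\theta}(s,u)$. Your per-sample reading of the conclusion is also right, since controlling the average $\mathbf{e}_T$ over $T'$ test points would need a union bound.

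The genuine gap is the probabilistic step. You locate it correctly but do not close it, and neither of your remedies works.
\begin{itemize}
\item \emph{Compression.} This needs a subset of at most $\dim(\theta)+1$ samples on which re-running the training map returns the same pair $(\hat\theta,\eta_T)$. But $\hat\theta$ minimizes the \emph{sum} $\sum_t L^{\mathrm{NI}}_\theta(\tilde s_t,\tilde u_t)$ of curved convex losses, so every sample enters its optimality conditions. In general no subset of bounded size reproduces $\hat\theta$, so the compression size is not controlled by $\dim(\theta)$ and you do not obtain $\beta(\alpha)$.
\item \emph{Conditioning on $\hat\theta$.} This is invalid: $\hat\theta$ is a function of the same samples over which $\eta_T$ is a maximum. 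Conditionally on $\hat\theta$ they are no longer i.i.d.\ draws from $\mathbb{P}$, so the scalar bound $(1-\alpha)^T$ does not apply. It would require computing the level on held-out data, which changes the statement.
\end{itemize}
Thus the key claim $\mathbb{P}^T\big[\mathbb{P}[L^{\mathrm{FO}}_{\hat\theta}(\mathbf{s},\mathbf{u})>\eta_T]>\alpha\big]\le\beta(\alpha)$ is left unproved.

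The paper crosses this step differently. It recasts \eqref{main_qg_prob} as a scenario program in $(\epsilon,\theta)$, absorbing the per-sample multipliers $(\lambda_t,\gamma_t)$ existentially into the convex sets $g(\chi_t)$; this keeps the decision dimension at $\dim(\theta)+1$. It applies the Campi--Garatti bound to the optimizer and then uses Lemma~\ref{loss_dominance} to get $\epsilon_T\le\eta_T$. This does not actually dispose of the obstacle you raised, for two reasons:
\begin{itemize}
\item the objective of \eqref{main_qg_prob} is a sum, not the min-max form $\min\|\epsilon\|$;
\item $L^{\mathrm{NI}}\le L^{\mathrm{FO}}$ together with $\epsilon_T\le\eta_T$ does not make $L^{\mathrm{NI}}_{\hat\theta}(\mathbf{s},\mathbf{u})\le\epsilon_T$ imply $L^{\mathrm{FO}}_{\hat\theta}(\mathbf{s},\mathbf{u})\le\eta_T$, because the dominance points the wrong way.
\end{itemize}

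A rigorous version follows from your own ingredients if the estimator is changed to the min-max FO program $\min_{\theta\in\Omega,\,\eta}\eta$ subject to $L^{\mathrm{FO}}_\theta(\tilde s_t,\tilde u_t)\le\eta$ for all $t\in[T]$ and \eqref{sm_cons}. This program is convex in $(\theta,\eta)$ with $\dim(\theta)+1$ variables, and its optimal value is $\max_t L^{\mathrm{FO}}_{\hat\theta}(\tilde s_t,\tilde u_t)=\eta_T$. Campi--Garatti then applies verbatim (with a tie-break rule), and your deterministic step finishes the proof. For the NI-sum estimator of \eqref{ig_main} itself, you would need a held-out sample or a uniform-convergence argument with a different $\beta$.
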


\textbf{Proof.} The overarching approach relies on recognizing \eqref{main_qg_prob} as an uncertain convex program, allowing us to formalize a probabilistic bound for violating the variational inequality of the forward game. Using the strong monotonicity of the pseudogradient operator, we connect this probabilistic violation to the true solution. We first establish the following intermediate result.
    \begin{lem}[Chance constraint bound] \label{sp_lemma}
        Under Assumptions \ref{feas_ass}, if $\{ \hat{\theta}, (\hat{\lambda}_t,\hat{\gamma}_t)_{t\in [T]}\}$ is the optimal solution of \eqref{main_qg_prob}, then for any $\alpha \in (0,1)$, with probability at least $1- \beta(\alpha)$ over the sample generation, 
        \begin{equation}\label{pb_eq}
            \mathbb{P}\Big[\langle F_{\hat{\theta}}(\mathbf{s,u}),x-\mathbf{u} \rangle \geq -\eta_T,\:\forall x \in \mathcal{U}(\mathbf{s}) \Big] \geq 1- \alpha,
        \end{equation}
        where $\beta(\alpha)= \sum^{\mathrm{dim}(\theta)}_{i=0} \binom{T}{i}\alpha^i (1-\alpha)^{T-i}$.
    \end{lem}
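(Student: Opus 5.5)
The plan is to recast the training problem as a scenario program and apply the standard scenario-approach generalization bound of Campi and Garatti. The key observation is that the constraint in \eqref{pb_eq} is exactly the event that the first-order loss at the sample is at most $\eta_T$:
\[
\langle F_{\hat{\theta}}(s,u),x-u\rangle \ge -\eta_T\ \ \forall x\in\mathcal{U}(s)
\iff
L^{\mathrm{FO}}_{\hat{\theta}}(s,u)\le \eta_T .
\]
So it suffices to bound $\mathbb{P}[L^{\mathrm{FO}}_{\hat{\theta}}(\mathbf{s},\mathbf{u})>\eta_T]\le\alpha$ with confidence $1-\beta(\alpha)$.

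\textbf{Step 1: an epigraphic scenario program.} I would consider the auxiliary program in the variables $(\theta,\eta)$:
\[
\min_{\theta\in\Omega,\ \eta}\ \eta \quad \text{s.t.}\quad L^{\mathrm{FO}}_{\theta}(\tilde s_t,\tilde u_t)\le \eta,\ \ t\in[T],
\]
together with \eqref{sm_cons}.

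\emph{Convexity.} For quadratic $J^i_\theta$, the operator $F_\theta(s,u)$ is affine in $\theta$ (linear in $(R^i_\theta,r^i_\theta)$). Hence $L^{\mathrm{FO}}_\theta(s,u)$ is a supremum of affine functions of $\theta$, so it is convex in $\theta$.

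\emph{Tractability.} Under Assumption~\ref{feas_ass}(ii)--(iii), LP duality on the polytope $\mathcal{U}(s)$ turns each sample constraint into finitely many convex constraints, through the same duality that produces \eqref{con_lmi}.

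Each sample $(\tilde s_t,\tilde u_t)$ then contributes a convex constraint indexed by an i.i.d. uncertainty realization, by Assumption~\ref{feas_ass}(i). This is precisely a convex scenario program.

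\textbf{Step 2: invoke the scenario theorem.} The Campi--Garatti theorem says that for a convex scenario program with $d$ decision variables and a unique solution, the solution violates a fresh constraint with probability at most $\alpha$, with confidence at least $1-\sum_{i=0}^{d-1}\binom{T}{i}\alpha^i(1-\alpha)^{T-i}$. Here $d=\dim(\theta)+1$ because of the epigraph variable $\eta$, which gives exactly the stated $\beta(\alpha)=\sum_{i=0}^{\dim(\theta)}\binom{T}{i}\alpha^i(1-\alpha)^{T-i}$.

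The dual multipliers $(\lambda_t,\gamma_t)$ are sample-specific. I would eliminate them by working directly with the supremum form of $L^{\mathrm{FO}}$, so that they do not count toward the dimension.

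\textbf{Step 3: connect to $\hat\theta$ (main obstacle).} The difficulty is that $\hat\theta$ minimizes the NI objective of \eqref{main_qg_prob}, not $\max_t L^{\mathrm{FO}}$. The scenario theorem, however, is stated for the optimizer of the scenario program itself. I would handle this in one of two ways.

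\emph{First option: a posteriori compression.} Use the a posteriori / compression-set version of the scenario theory. The level $\eta_T=\max_t L^{\mathrm{FO}}_{\hat\theta}(\tilde s_t,\tilde u_t)$ is determined by at most $\dim(\theta)+1$ support samples through a fixed, permutation-invariant map from the data to $(\hat\theta,\eta_T)$. Then apply the compression bound with compression size $\dim(\theta)+1$.

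\emph{Second option: a tie-breaking rule.} Argue that $(\hat\theta,\eta_T)$ is the solution of the epigraphic program under a tie-breaking rule that selects $\hat\theta$, for instance by lexicographically minimizing the NI cost over the $\eta$-optimal set.

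\textbf{Checks.} In either route I would verify the standing scenario-approach hypotheses: existence and uniqueness of the solution (enforced by a tie-break rule), and measurability. Once the $(1-\beta)$-confidence bound on $\mathbb{P}[L^{\mathrm{FO}}_{\hat\theta}(\mathbf{s},\mathbf{u})>\eta_T]\le\alpha$ is established, unfolding the supremum in \eqref{fo_loss} gives \eqref{pb_eq} directly.
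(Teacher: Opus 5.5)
\textbf{The gap is in Step~3: neither bridge connects $\hat\theta$ to your scenario program.}

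Your Steps 1--2 are sound. $F_\theta$ is linear in $(R^i_\theta,r^i_\theta)$, so $\{(\theta,\eta): L^{\mathrm{FO}}_\theta(s,u)\le\eta\}$ is convex. The multipliers can be eliminated through the supremum form. With $d=\dim(\theta)+1$, the Campi--Garatti bound gives exactly $\beta(\alpha)$. However, this proves \eqref{pb_eq} for the optimizer of $\min_\theta\max_t L^{\mathrm{FO}}_\theta(\tilde s_t,\tilde u_t)$ (with a fixed tie-break). It does not prove it for the $\hat\theta$ of \eqref{main_qg_prob}.

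The compression route fails because $\hat\theta$ minimizes a \emph{sum} $\sum_t L^{\mathrm{NI}}_\theta(\tilde s_t,\tilde u_t)$. Every sample contributes a subgradient to its optimality condition, so removing any sample generically moves $\hat\theta$, and with it $\eta_T$. No subset of $\dim(\theta)+1$ samples reconstructs $(\hat\theta,\eta_T)$. The realized compression size is generically $T$, which makes the a posteriori bound vacuous. The count of $\dim(\theta)+1$ support constraints holds for the min--max program, not for a sum-type estimator followed by a max evaluation.

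The tie-break route fails for two reasons. First, it presupposes $\hat\theta\in\arg\min_\theta\max_t L^{\mathrm{FO}}_\theta(\tilde s_t,\tilde u_t)$, which is false in general. The two programs share minimizers essentially only for consistent, noise-free data, where both optimal values are $0$. Second, the scenario theorem needs a tie-break rule fixed independently of the samples. The NI sum depends on all samples, so discarding a non-support constraint can change the selected point. That destroys the bound on the number of support constraints behind $\beta(\alpha)$.

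\textbf{What your argument does establish, and how it compares with the paper.} As written, your argument proves the lemma only if $\hat\theta$ is redefined as the fixed-tie-break solution of your epigraphic FO program, with $\eta_T$ its optimal value. The missing ingredient for the statement as given is a genuine link between the sum-type estimator and a scenario optimizer.

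The paper argues differently. It asserts that \eqref{main_qg_prob} is itself equivalent to the scenario program $\min\{\|\epsilon\|:(\epsilon,\theta)\in\bigcap_t g(\chi_t)\}$ in $\dim(\theta)+1$ variables. It then applies \cite{campi2008exact} to get an NI-level violation bound at $(\epsilon_T,\hat\theta)$. Finally, it uses Lemma~\ref{loss_dominance} ($\epsilon_T\le\eta_T$) to restate the event as the $\eta_T$-approximate VI.

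That route does not address the sum-versus-max issue you correctly flagged; the equivalence is asserted rather than shown. Its last step also needs $\{L^{\mathrm{NI}}_{\hat\theta}\le\eta_T\}\subseteq\{L^{\mathrm{FO}}_{\hat\theta}\le\eta_T\}$, whereas $L^{\mathrm{NI}}\le L^{\mathrm{FO}}$ gives only the reverse inclusion. Working with the FO loss directly, as you do, avoids that second problem, which is a real advantage of your formulation. Still, neither argument covers the NI-sum minimizer as the lemma states it.
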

    
      \textbf{Proof.} Let us first define the feasibility set
        \begin{align*}
            g(\chi):=\;& \Big \{ \epsilon \in \mathbb{R},\theta \in\Omega ~\mid~ \exists \lambda,\gamma\; \mathrm{such\; that\;}\eqref{con_lmi}-\eqref{con_lamda} \text{ are satisfied while using } \chi\Big \}
        \end{align*}
        where the uncertain dataset $\mathcal{D}$ is mapped to the random variable $\chi$. Using $g(\chi)$, the convex program \eqref{main_qg_prob} can be equivalently expressed as the scenario program 
        \begin{align} \label{sample_prog}
            \min \Big\{ \lVert \epsilon\rVert : (\epsilon,\theta)\in \bigcap_{t=1}^{T} g(\chi_t) \Big\}
        \end{align}
        where $\chi_t$ is a realization of $\chi$. We generalize the scenario program \eqref{sample_prog} over the continuous support set $\Delta$ as $\min_{\epsilon,\theta} \|\epsilon\| \;\mathrm{s.t.} \;  (\epsilon,\theta)\in g(\chi), \; \forall \chi \in \Delta,$ and define the violation probability as $V(\epsilon_T,\hat{\theta}) := \mathbb{P}[\chi \in \Delta \mid (\epsilon_T,\hat{\theta})\notin g(\chi)],$ where $\epsilon_T$ and $\hat{\theta}$ denote the optimal value and solution of \eqref{sample_prog}, respectively. The authors in \cite{campi2008exact} established the distribution-free bound on the violation probability as $\mathbb{P}\Big[V(\epsilon_T,\hat{\theta})>\alpha\Big] \leq \beta(\alpha).$ From Lemma \ref{loss_dominance}, we establish that $\epsilon_T \leq \eta_T$ at $\hat{\theta}$. Following Definition \ref{fo_loss_defn}, the violation condition can be reformulated in terms of an $\eta_T$-approximate variational inequality: $\langle F_{\hat{\theta}}(\mathbf{s,u}),x-\mathbf{u} \rangle \geq -\eta_T,\:\forall x \in \mathcal{U}(\mathbf{s})$. Invoking the bound on violation probability yields~\eqref{pb_eq}. \hfill $\blacksquare$

        Leveraging this lemma, we invoke a fundamental property of the operator $F_{\theta}$ as stated in \cite[Theorem $4.1$]{pang1987posteriori}. 
        \begin{theorem}[Optimality gap] \label{mono_theorem}
            If $F_{\theta}$ is strongly monotone with parameter $\zeta$, then any $\epsilon$-approximate solution $y$ of the VI($F_\theta,\mathcal{U}$) is bounded as $\lVert y-x^\star \rVert_2 \leq\sqrt{\frac{\epsilon}{\zeta}} $, where $x^\star$ is the optimal solution of the VI($F_\theta,\mathcal{U}$). 
        \end{theorem}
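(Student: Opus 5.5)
The plan is to prove Theorem~\ref{mono_theorem} directly from the definitions of an $\epsilon$-approximate solution and of strong monotonicity. Following the convention of Lemma~\ref{sp_lemma}, call $y\in\mathcal{U}$ an $\epsilon$-approximate solution of VI($F_\theta,\mathcal{U}$) if
\[
\langle F_\theta(y), x-y\rangle \geq -\epsilon \quad \text{for all } x\in\mathcal{U}.
\]
Strong monotonicity with parameter $\zeta$ means
\[
\langle F_\theta(x)-F_\theta(y), x-y\rangle \geq \zeta\|x-y\|^2 \quad \text{for all } x,y .
\]
Existence and uniqueness of $x^\star$ follow from strong monotonicity, continuity (here $F_\theta$ is affine), and the closed convex, nonempty feasible set of Assumption~\ref{feas_ass}. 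I will take these as given.

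The argument is three chained inequalities. First, test the exact VI at $x^\star$ with the point $y\in\mathcal{U}$:
\[
\langle F_\theta(x^\star), y-x^\star\rangle \geq 0 .
\]
Second, test the approximate VI at $y$ with the point $x^\star\in\mathcal{U}$:
\[
\langle F_\theta(y), x^\star-y\rangle \geq -\epsilon .
\]
Adding the two gives
\[
\langle F_\theta(y)-F_\theta(x^\star), x^\star-y\rangle \geq -\epsilon,
\quad\text{i.e.}\quad
\langle F_\theta(y)-F_\theta(x^\star), y-x^\star\rangle \leq \epsilon .
\]
Third, strong monotonicity bounds the left-hand side from below by $\zeta\|y-x^\star\|^2$. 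Hence $\zeta\|y-x^\star\|^2\le\epsilon$, which is the claimed bound $\|y-x^\star\|_2\le\sqrt{\epsilon/\zeta}$.

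The only point needing care is the precise notion of ``$\epsilon$-approximate solution.'' It must be the one-sided residual form above, evaluated at $F(y)$ and not at $F(x)$, because that is what Lemma~\ref{sp_lemma} produces with $\epsilon=\eta_T$. It also requires $y\in\mathcal{U}$, so that $y$ is a legitimate test point in the exact VI. With that convention the steps above are complete; there is no genuine obstacle beyond checking that $\mathbf{u}$ lies in $\mathcal{U}(\mathbf{s})$, which holds because observed equilibria are feasible.

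For the subsequent use in Theorem~\ref{qg_theorem}, I will combine the result pointwise with the event in Lemma~\ref{sp_lemma}. On that event, which has probability at least $1-\alpha$ under confidence $1-\beta(\alpha)$, the true equilibrium $u^*$ is an $\eta_T$-approximate solution of the learned VI. Therefore $\|u^*-\hat u\|\le\sqrt{\eta_T/\zeta}$, and averaging over the test samples yields the bound on $\mathbf{e}_T$.
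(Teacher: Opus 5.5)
Your proof is correct. The paper does not prove this statement itself; it imports it as a black box from \cite[Theorem~4.1]{pang1987posteriori}. You instead give the standard self-contained derivation in three steps:
\begin{enumerate}
\item test the exact VI at $x^\star$ with the feasible point $y$;
\item test the approximate VI at $y$ with $x^\star$;
\item add the two inequalities and close with strong monotonicity.
\end{enumerate}
What your route buys is transparency about exactly which hypotheses are used. You need only the inner-product form of strong monotonicity, feasibility $y\in\mathcal{U}$, and the primal-gap notion $\langle F_\theta(y),x-y\rangle\ge-\epsilon$ for all $x\in\mathcal{U}$, which is precisely what Lemma~\ref{sp_lemma} delivers. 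No Lipschitz constant, polyhedral structure of $\mathcal{U}$, or affinity of $F_\theta$ is needed. The same argument therefore also covers the non-affine RKHS operator used in Theorem~\ref{nqg_theorem}.

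Your insistence on evaluating the gap at $F_\theta(y)$ is well placed. For the Minty-type gap $\sup_{x\in\mathcal{U}}\langle F_\theta(x),y-x\rangle\le\epsilon$, the sharp conclusion is only $\|y-x^\star\|\le 2\sqrt{\epsilon/\zeta}$; this is already attained for $F_\theta(u)=\zeta u$ on $[-1,1]$.

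The one bridging step you leave implicit is that the paper states strong monotonicity as a bound $\tfrac12(\mathbf{J}+\mathbf{J}^\top)\succeq\zeta I$ on the symmetrized Jacobian. Integrating this bound along the segment from $x^\star$ to $y$, which lies in the convex set $\mathcal{U}$, gives the inner-product inequality you use.

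One caveat concerns only your last paragraph, which goes beyond this statement. Lemma~\ref{sp_lemma} controls a single test draw, so each individual bound $\|u^*_t-\hat u_t\|\le\sqrt{\eta_T/\zeta}$ holds with probability at least $1-\alpha$. Concluding the same bound for the average $\mathbf{e}_T$ requires all $T'$ test points to be good simultaneously, and a union bound only guarantees that with probability at least $1-T'\alpha$. The paper's proof of Theorem~\ref{qg_theorem} glosses over the same point.
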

        Applying Theorem \ref{mono_theorem} to the $\eta_T$-approximate variational inequality in \eqref{pb_eq}, we substitute $(\mathbf{s,u})$ with a given testing data point $(s^*_j,u^*_j)$. By contradiction, if $\hat{u}_j$ diverges from $u^*_j$ by more than $\sqrt{\eta_T/\zeta}$, the event $\epsilon_T > \eta_T$ must have occurred, which Lemma \ref{sp_lemma} bounds with probability $\beta(\alpha)$. Extending this logic across all $T'$ testing points directly upper bounds the mean prediction error ($\mathbf{e}_T$) as stated in \eqref{mpe_bound}.      \hfill $\blacksquare$

It is noteworthy that $\eta_T$ in \eqref{mpe_bound} can turn out to be zero in the cases where the collected data is pure (without any noise), and/or the data is generated from consistent functions, i.e., the data generating true $\theta^\star \in \Omega$. Otherwise, any other sources of inconsistencies in the training data accumulate in $\eta_T$. It is also interesting to note that the probabilistic bound as in~\eqref{mpe_bound} is indeed tight even when $\eta_T=0$ (i.e., the training loss is zero); see~\cite{fatemi_nips} for a recent work in the context of inverse optimization, which is a special case of the games studied in this work.   
\section{Inverse Formulation for Non-Quadratic Monotone Games} \label{non_quad_sec}
Expanding the hypothesis space to capture more general functions is a practical imperative for learning complex forward games. However, one must balance expressivity with the necessity of a tractable inverse formulation. Furthermore, to ensure compatibility with popular Nash equilibrium-seeking algorithms~\cite{salehisadaghiani2016distributed}, we require the pseudogradient operator of the learned game to remain (strongly) monotone. To achieve this, we synthesize monotonicity constraints with nonparametric structures drawn from infinite-dimensional reproducing kernel Hilbert spaces (RKHS).

\subsection{Proposed formulation}
We consider the following foundational concept from \cite[Equation $1.3.6$]{chorin1990mathematical}. 
\begin{defn}[Helmholtz-Hodge decomposition]
Let $F$ be a sufficiently smooth vector field on a bounded domain $\Xi$ with a smooth boundary $\partial \Xi$. Then, $F$ can be uniquely decomposed as
\begin{equation} \label{hhd_main}
    F= \nabla\phi + r,
\end{equation}
where $\phi$ is a scalar potential function, and the vector field $r$ is divergence-free ($\nabla\cdot r=0$) and tangential to the boundary $\partial \Xi$. 
\end{defn}
The Helmholtz-Hodge decomposition is a cornerstone of vector calculus widely applied in fluid dynamics and geophysics \cite{bhatia2012helmholtz}. Inspired by this, we model the pseudogradient operator $F_{\theta}$ in a structurally analogous form
\begin{equation} \label{proposed_operator}
    F_{\theta}(s,u)= \nabla_u \phi_{\theta}(u) + \varphi(s)S_{\theta}u,
\end{equation}
where the scalar potential $\phi_{\theta}(u)$ is restricted to strictly convex functions, $\varphi:\mathbb{S}\rightarrow\mathbb{R}$ is a known feature map, and $S_{\theta}$ is constrained to be a skew-symmetric matrix (i.e., $S_{\theta}=-S_{\theta}^{\top}$). Here, the divergence-free vector field $r$ in \eqref{hhd_main} is replaced by $\varphi(s)S_{\theta}u$, satisfying $\nabla_u\cdot(\varphi(s)S_{\theta}u)=0$. Let us first gain an intuitive understanding of the proposed pseudogradient operator decomposition through the following example. 
\begin{exmp}[Helmholtz--Hodge decomposition] \label{hh_decom}
Consider the two-player game $J^1=u^1(0.35u^1-su^2)$ and $J^2=u^2(0.8u^2+su^1)$ with $s=1.1$. Its pseudogradient $ F(s,u)=[\,0.7u^1-1.1u^2,\;1.6u^2+1.1u^1\,]^\top $ decomposes as $ \nabla_u\phi(u)=[\,0.7u^1,\;1.6u^2\,]^\top,\; \varphi(s)Su=[\,-1.1u^2,\;1.1u^1\,]^\top. $ Figure \ref{vector_field_fig} illustrates the decomposition. Although chosen for visualization, this structured example highlights how the rotational and irrotational components of $F_\theta$ enable richer objective classes. 
\end{exmp}
\begin{figure}[htbp]
    \centering
    \begin{tikzpicture}

        \begin{axis}[
            name=ax1,
            title={$ \nabla_u \phi(u)$},
            xmin=-2.5, xmax=2.5,
            ymin=-2.5, ymax=2.5,
            view={0}{90},
            axis equal image,
            width=6.2cm,
            xlabel=$u^1$, ylabel=$u^2$,
            xtick=\empty, ytick=\empty,
        ]
            \addplot3 [
                blue, -stealth,
                domain=-2:2, domain y=-2:2,
                samples=11,
                quiver={
                    u={0.7*x},
                    v={1.6*y},
                    scale arrows=0.12,
                }
            ] (x, y, 0);
        \end{axis}

        \node at ($(ax1.east) + (0.9cm,0)$) {\Huge $+$};

        \begin{axis}[
            name=ax2,
            at={(ax1.east)},
            anchor=west,
            xshift=1.8cm,
            title={$\varphi(s)S u$},
            xmin=-2.5, xmax=2.5,
            ymin=-2.5, ymax=2.5,
            view={0}{90},
            axis equal image,
            width=6.2cm,
            xlabel=$u^1$, ylabel=$u^2$,
            xtick=\empty, ytick=\empty,
        ]
            \addplot3 [
                red, -stealth,
                domain=-2:2, domain y=-2:2,
                samples=11,
                quiver={
                    u={-1.1*y},
                    v={1.1*x},
                    scale arrows=0.12,
                }
            ] (x, y, 0);
        \end{axis}

        \node at ($(ax2.east) + (0.9cm,0)$) {\Huge $=$};

        \begin{axis}[
            name=ax3,
            at={(ax2.east)},
            anchor=west,
            xshift=1.8cm,
            title={$F(s,u)$},
            xmin=-2.5, xmax=2.5,
            ymin=-2.5, ymax=2.5,
            view={0}{90},
            axis equal image,
            width=6.2cm,
            xlabel=$u^1$, ylabel=$u^2$,
            xtick=\empty, ytick=\empty,
        ]
            \addplot3 [
                purple, -stealth,
                domain=-2:2, domain y=-2:2,
                samples=11,
                quiver={
                    u={0.7*x - 1.1*y},
                    v={1.6*y + 1.1*x},
                    scale arrows=0.12,
                }
            ] (x, y, 0);
        \end{axis}

    \end{tikzpicture}
    \caption{Helmholtz-Hodge decomposition of the vector field $F$ into its irrotational (gradient) component $\nabla_u \phi(u)$ and its solenoidal (rotational) component $\varphi(s)Su$.}
    \label{vector_field_fig}
\end{figure}

One of the main advantages of this construction is that the strong monotonicity of $F_{\theta}$ can be guaranteed intrinsically. Consider the Jacobian $\mathbf{J}(s,u)= \nabla_uF_{\theta}(s,u)= \nabla^2_u\phi_{\theta}(u)+ \varphi(s)S_{\theta}$. The strong monotonicity condition is enforced by constraining $\frac{1}{2}(\mathbf{J}(s,u)+\mathbf{J}^{\top}(s,u))= \nabla^2_u\phi_{\theta}(u)+ \frac{1}{2}\varphi(s)(S_{\theta}+S_{\theta}^T)= \nabla^2_u\phi_{\theta}(u) \succ 0.$ This distills the challenge to modeling $\phi_{\theta}$ such that it flexibly captures non-quadratic objectives. We propose parameterizing $\phi_{\theta}$ as $\phi_{\theta}(u)= \frac{1}{2}u^{\top}R_{\theta}u + g_{\theta}(u),$ where $R_{\theta}\succ 0$, and $g_{\theta} \in \mathcal{H}_k$ belongs to the RKHS defined by a suitable kernel $k(x,y)$. Since we are directly modeling the pseudogradient operator, the empirical loss minimization is naturally formulated in terms of the FO loss \eqref{fo_loss}. We define the feasible operator class as $\Omega_F:= \{R_{\theta},S_{\theta} \mid  R_{\theta}= \theta_{R} I_n,\; 0 < \theta_R \leq \lambda_R,S_{\theta}=-S^{\top}_{\theta} \},$ and formalize the inverse formulation of the non-quadratic monotone game as
\begin{align}\label{erm_kernel}
    \min_{\theta \in \Omega_F,\; g_\theta \in \mathcal{H}_k} \quad & \sum_{t=1}^T \; L^{\mathrm{FO}}_{\theta}(\tilde{s}_t,\tilde{u}_t) + \lambda_{\mathrm{reg}} \lVert \nabla_u^2g_\theta \rVert^2_{\mathcal{H}_k} , \nonumber \\ 
    \mathrm{s.t.}\qquad & \nabla^2_u \phi_{\theta}(u)\succeq \lambda_{g}I_n,\;\; \forall u \in \mathcal{U}(\tilde{s}_t), \; t \in [T],
\end{align}
for a chosen margin $\lambda_g>0$, and regularizer penalty $\lambda_{\mathrm{reg}}>0$. The infinite-dimensional nature of $\phi_{\theta}$ obfuscates the tractability of \eqref{erm_kernel}. The following result guarantees a finite-dimensional convex reduction. 

\begin{theorem}[Tractable convex reformulation] \label{kernel_formulation}
    Consider the optimization problem \eqref{erm_kernel} where $F_\theta$ in~\eqref{proposed_operator} influencing the term $L^\mathrm{FO}_\theta$ in the objective (defined as in~\eqref{fo_loss}), Assumption \ref{feas_ass} holds, and the RKHS $\mathcal{H}_k$ is induced by the squared exponential kernel $k(x,y)= \exp(-\gamma_k ||x-y||^2_2)$.  A convex reformulation of the program \eqref{erm_kernel} is given by
    \begin{subequations} \label{erm_kernel_prop}
        \begin{align}
            \min_{\substack{ \theta \in \Omega_F, \epsilon\geq 0 \\ \lambda\geq 0, \alpha_\theta \in \mathbb{R}^T}} \;\; & ||\epsilon|| + \lambda_{\mathrm{reg}}(\alpha_\theta^{\top}K\alpha_\theta), \\
            \mathrm{s.t.} \qquad & F_{\theta}(\tilde{s}_t,\tilde{u}_t)= \theta_R I_n\tilde{u}_t + \sum_{j=1}^T \alpha_{\theta}^j D_{jt} \nonumber \\ 
            & + \varphi(\tilde{s}_t)S_{\theta}\tilde{u}_t,\quad \forall t \in [T],\\
            & G(\tilde{s}_t)^{\top}\lambda_t+F_{\theta}(\tilde{s}_t,\tilde{u}_t)=0,\; \forall t \in [T], \\
            &\tilde{u}_t^{\top}F_{\theta}(\tilde{s}_t,\tilde{u}_t)+h^{\top}(\tilde{s}_t)\lambda_t \leq \epsilon_t, \; \forall t \in [T],\\
            &(\theta_R-\lambda_g) \geq C_k \sqrt{\alpha_{\theta}^{\top}K_g\alpha_{\theta}},
        \end{align}
    \end{subequations}
    where $K$ is the Gram matrix with elementwise component as $[K]_{ij}= \nabla_{xy} k(x,y)\mid_{(\tilde{u}_i,\tilde{u}_j)}$, $K_g$ is Gram matrix corresponding to kernel $k(x,y)$, $D_{jt}=2\gamma_k(\tilde{u}_j-\tilde{u}_t)k(\tilde{u}_j,\tilde{u}_t) $, and $C_k=2\sqrt{3}\gamma_k$ is a kernel-specific constant. 
\end{theorem}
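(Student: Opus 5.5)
The reformulation follows from three ingredients: LP duality for the inner supremum in the FO loss, a representer theorem for $g_\theta$, and a kernel bound turning the semi-infinite Hessian constraint into one second-order cone constraint.

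\emph{Step 1 (inner supremum).} For fixed $t$, the FO loss \eqref{fo_loss} is
\[
L^{\mathrm{FO}}_\theta(\tilde s_t,\tilde u_t)=\langle F_\theta(\tilde s_t,\tilde u_t),\tilde u_t\rangle-\inf_{y\in\mathcal U(\tilde s_t)}\langle F_\theta(\tilde s_t,\tilde u_t),y\rangle .
\]
By Assumption~\ref{feas_ass}(ii)--(iii), $\mathcal U(\tilde s_t)=\{y: G(\tilde s_t)y\le h(\tilde s_t)\}$ is a nonempty bounded polytope, with $G,h$ block-stacked from $G_i,h_i$. Strong LP duality gives
\[
\inf_y \langle F,y\rangle=\max_{\lambda\ge0}\{-h^\top\lambda : G^\top\lambda+F=0\}.
\]
Hence $L^{\mathrm{FO}}_\theta\le\epsilon_t$ holds if and only if some $\lambda_t\ge0$ satisfies $G^\top\lambda_t+F_\theta=0$ and $\tilde u_t^\top F_\theta+h^\top\lambda_t\le\epsilon_t$. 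Introducing the epigraph variables $\epsilon_t\ge0$ then reproduces the dual constraints of \eqref{erm_kernel_prop}; note $L^{\mathrm{FO}}\ge0$ since $\tilde u_t$ is feasible. This step is linear in $F_\theta$, so it is jointly convex once $F_\theta$ is affine in the decision variables. Replacing the sum of losses by $\|\epsilon\|$ is the objective choice the theorem adopts, mirroring \eqref{sample_prog}.

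\emph{Step 2 (representer).} The loss depends on $g_\theta$ only through $\nabla_u g_\theta(\tilde u_t)$, $t\in[T]$. I would define the derivative-kernel feature space and project $g_\theta$ onto $\mathrm{span}\{\partial_{y}k(\cdot,\tilde u_j)\}$. The orthogonal component leaves $\nabla g_\theta(\tilde u_t)$ unchanged, by the derivative reproducing property $\langle g,\partial_y k(\cdot,y)\rangle=\nabla g(y)$. It also does not increase the regulariser or the Hessian bound. This reduces $g_\theta$ to a finite expansion with coefficients $\alpha_\theta\in\mathbb R^T$, one per sample after scalarising along the data directions as the theorem does. For the Gaussian kernel, $\nabla_x k(x,y)=-2\gamma_k(x-y)k(x,y)$, which yields $\nabla g_\theta(\tilde u_t)=\sum_j\alpha^j_\theta D_{jt}$. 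Together with $\nabla(\tfrac12\theta_R\|u\|^2)=\theta_R\tilde u_t$ and the skew term, this gives the first equality constraint, which is linear in $(\theta_R,\alpha_\theta,S_\theta)$. The regulariser becomes the quadratic form $\alpha_\theta^\top K\alpha_\theta$ with the mixed-derivative Gram matrix $K$; $K$ is PSD as a Gram matrix, so the objective is convex.

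\emph{Step 3 (Hessian constraint).} This is the main obstacle, since $\nabla^2\phi_\theta(u)=\theta_R I+\nabla^2 g_\theta(u)\succeq\lambda_g I$ must hold for all $u$ in the feasible sets. I would bound $\|\nabla^2 g_\theta(u)\|_{\rm op}\le \sup_u\|\partial_{ij}k(\cdot,u)\|_{\mathcal H_k}\,\|g_\theta\|_{\mathcal H_k}$ via the reproducing property for second derivatives. The constant $\sup_u\|\partial^2 k(\cdot,u)\|$ equals $\sqrt{\partial^2_x\partial^2_y k(x,x)}$ up to a dimension factor. For the Gaussian kernel this is $\sqrt{12\gamma_k^2}=2\sqrt3\gamma_k=C_k$, the fourth derivative at the origin giving $3(2\gamma_k)^2$. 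Using $\|g_\theta\|_{\mathcal H_k}=\sqrt{\alpha_\theta^\top K_g\alpha_\theta}$ under the (sufficient) representation then shows that $\theta_R-\lambda_g\ge C_k\sqrt{\alpha_\theta^\top K_g\alpha_\theta}$ implies the semi-infinite constraint. This constraint is a second-order cone constraint, hence convex.

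The delicate point is reconciling which Gram matrix controls which norm, and accepting that this step is a conservative inner approximation rather than exact. I would therefore present \eqref{erm_kernel_prop} as a convex (sufficient) reformulation, consistent with the theorem's wording ``a convex reformulation''.
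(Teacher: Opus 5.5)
Your Steps 1 and 3 follow the paper's argument. The LP-duality treatment of the inner supremum is the paper's display \eqref{duality_eq}. The bound $\sup_u\|\nabla_u^2 g_\theta(u)\|_{\mathrm{op}}\le C_k\|g_\theta\|_{\mathcal H_k}$, which turns the semi-infinite Hessian constraint into one second-order cone constraint, is the paper's last step; your fourth-derivative computation of $C_k$ is more explicit than the paper's. The gap is in Step 2.

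Projecting onto $\mathrm{span}\{\partial_y k(\cdot,\tilde u_j)\}$ gives $g_\theta=\sum_j\beta_j^\top\partial_y k(\cdot,\tilde u_j)$ with vector coefficients $\beta_j\in\mathbb R^n$. Hence $\nabla g_\theta(\tilde u_t)=\sum_j\nabla_x\nabla_y^\top k(\tilde u_t,\tilde u_j)\,\beta_j$ and $\|g_\theta\|_{\mathcal H_k}^2=\sum_{i,j}\beta_i^\top\nabla_x\nabla_y^\top k(\tilde u_i,\tilde u_j)\,\beta_j$. So the mixed-derivative matrix $K$ becomes the Gram matrix of $\|g_\theta\|$ itself. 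The formulas you then use, $\nabla g_\theta(\tilde u_t)=\sum_j\alpha^jD_{jt}$ and $\|g_\theta\|^2=\alpha_\theta^\top K_g\alpha_\theta$, hold instead for $g_\theta=\sum_j\alpha^j k(\tilde u_j,\cdot)$. That is a different finite-dimensional span, and it is the expansion implicit in the paper's $\nabla_u^2 g_\theta(u)=\sum_t\alpha^t\nabla_u^2k(u_t,u)$. ``Scalarising along the data directions as the theorem does'' appeals to the statement rather than deriving it. This mismatch is the source of your ``which Gram matrix controls which norm'' tension.

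The projection argument also cannot be run on \eqref{erm_kernel} as posed. The orthogonal component leaves $\nabla g_\theta(\tilde u_t)$ unchanged but changes $\nabla_u^2 g_\theta(u)$ at non-sample points. So the pointwise constraint on $\mathcal U(\tilde s_t)$ need not survive projection; only the norm surrogate of Step 3 does. You also give no argument that $\|\nabla_u^2 g_\theta\|_{\mathcal H_k}$ decreases under the projection.

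The fix is to drop the representer theorem. The theorem only claims a convex inner approximation, as you say and as the paper's text after the theorem states, so you can simply restrict $g_\theta$ to $\mathrm{span}\{k(\tilde u_j,\cdot)\}_{j\in[T]}$. Then:
\begin{itemize}
\item $\nabla g_\theta(\tilde u_t)=\sum_j\alpha^j\nabla_u k(\tilde u_j,u)|_{u=\tilde u_t}=\sum_j\alpha^jD_{jt}$ and $\|g_\theta\|_{\mathcal H_k}^2=\alpha_\theta^\top K_g\alpha_\theta$.
\item Every constraint is affine or conic in $(\theta_R,S_\theta,\alpha_\theta,\lambda,\epsilon)$.
\item Every feasible point of \eqref{erm_kernel_prop} gives a feasible point of \eqref{erm_kernel}: Step 1 gives $L^{\mathrm{FO}}_\theta(\tilde s_t,\tilde u_t)\le\epsilon_t$, and Step 3 gives $\nabla_u^2\phi_\theta(u)\succeq(\theta_R-C_k\|g_\theta\|_{\mathcal H_k})I_n\succeq\lambda_gI_n$ for all $u$.
\end{itemize}
The term $\alpha_\theta^\top K\alpha_\theta$ then only needs a positive semidefinite matrix to be convex. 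Treat it as a surrogate regulariser, not as an identity with $\|\nabla_u^2g_\theta\|_{\mathcal H_k}^2$.

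Finally, in Step 3 avoid an entrywise bound. Because $\nabla_u^2g_\theta(u)$ is symmetric, $\|\nabla_u^2g_\theta(u)\|_{\mathrm{op}}=\sup_{\|v\|=1}|\langle g_\theta,\partial_v^2k(\cdot,u)\rangle|$, where $\partial_v^2$ is the second directional derivative along $v$ in the second argument. For unit $v$, $\|\partial_v^2k(\cdot,u)\|_{\mathcal H_k}^2=\frac{d^4}{dr^4}e^{-\gamma_kr^2}\big|_{r=0}=12\gamma_k^2$. This gives $C_k=2\sqrt3\gamma_k$ with no dimension factor; an entrywise bound would not reproduce the stated constant.
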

The proof is provided in Appendix \ref{proof_prop_kernel}. It is noteworthy that \eqref{main_qg_prob} is an exact reformulation of \eqref{ig_main}, whereas \eqref{erm_kernel_prop} is an approximation of \eqref{erm_kernel}. In addition, we learn the objective functions through \eqref{main_qg_prob}, whereas we learn pseudogradient operators by solving \eqref{erm_kernel_prop}.  
\subsection{Probabilistic performance guarantees}
We now derive probabilistic bounds on the prediction capabilities of the learned non-quadratic monotone game. We analyze the mean prediction error $\mathbf{e}_T$ \eqref{perf_criteria}, parameterized by $\hat{\theta}$ retrieved from \eqref{erm_kernel_prop}. Let $\hat{\eta}_{T}$ denote the optimal objective value.
\begin{theorem}[Out-of-sample performance] \label{nqg_theorem}
    Assume $F_{\theta}$ is strongly monotone with parameter $\zeta$, the feasible set $\mathcal{U}(\mathbf{s})$ is contained within a ball of radius $V$ almost surely, and Assumption \ref{feas_ass} holds. For any $\delta \in (0,1)$ and $\alpha>0$, with probability at least $1-\delta$ over the sampling of $\mathcal{D}$: 
    \begin{align} \label{mpe_kernel_bound}
        \mathbb{P}\left[ \mathbf{e}_T \leq\sqrt{\frac{\hat{\eta}_T + \alpha}{\zeta}}  \right]\geq 1-\Big(6GV \sqrt{\frac{\log (2/\delta)}{2T\alpha^2}}+  
        \frac{4(G+2MV)(\lambda_R-\lambda_g)}{C_kT\alpha} \sqrt{\mathrm{Tr}(K)}\Big),
    \end{align}
    where $A_\theta:=\theta_R I_n+ \varphi(s) S_\theta$, $ M= \|A_\theta\| +4\gamma_k C_\gamma \sum_{j=1}^T|\alpha_\theta^j|$, and $ G = \|A_\theta\|V + \sqrt{\frac{2\gamma_k}{e}} \sum_{j=1}^T|\alpha_\theta^j|$, for a constant \(C_\gamma>0\) dependent strictly on \(\gamma_k\).
\end{theorem}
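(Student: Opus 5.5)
The plan is a Rademacher-complexity generalization argument for the FO loss, combined with the optimality-gap result (Theorem~\ref{mono_theorem}) as in the proof of Theorem~\ref{qg_theorem}. The bound has two $\alpha$-dependent terms, each scaling like $1/\alpha$. This suggests a Markov-type inequality: we bound the expected FO loss of the learned operator, $\mathbb{E}[L^{\mathrm{FO}}_{\hat\theta}(\mathbf{s},\mathbf{u})]$, by the empirical loss plus complexity terms. Then $\mathbb{P}[L^{\mathrm{FO}}_{\hat\theta}(\mathbf{s},\mathbf{u}) > \hat\eta_T+\alpha]$ is controlled by (expected excess)$/\alpha$. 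On the complementary event, $\mathbf{u}$ is an $(\hat\eta_T+\alpha)$-approximate solution of VI($F_{\hat\theta},\mathcal{U}(\mathbf{s})$). Theorem~\ref{mono_theorem} with strong monotonicity parameter $\zeta$ then gives $\|\mathbf{u}-\hat u\|\le\sqrt{(\hat\eta_T+\alpha)/\zeta}$. Averaging over test points gives the statement for $\mathbf{e}_T$.

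\emph{Step 1 (boundedness and Lipschitzness of the loss class).} We write $F_\theta(s,u)=A_\theta u+\sum_j\alpha_\theta^j D_j(u)$, where $D_j(u)=2\gamma_k(\tilde u_j-u)k(\tilde u_j,u)$ is the RKHS gradient term. Since $\|x e^{-\gamma_k x^2}\|\le (2\gamma_k e)^{-1/2}$, each term is bounded by $\sqrt{2\gamma_k/e}$, which gives $\|F_\theta\|\le G$ on the ball of radius $V$. Consequently $L^{\mathrm{FO}}_\theta=\sup_{y}\langle F_\theta,u-y\rangle\in[0,2GV]$. Similarly, the Jacobian of $D_j$ is bounded by $2\gamma_k C_\gamma$ for a kernel-dependent constant. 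This yields the Lipschitz constant $M$ of $u\mapsto F_\theta(s,u)$, and hence a Lipschitz constant $G+2MV$ of the loss with respect to the operator values. That is, $|L^{\mathrm{FO}}_\theta-L^{\mathrm{FO}}_{\theta'}|\le 2V\|F_\theta-F_{\theta'}\|$, plus the variation in $u$.

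\emph{Step 2 (uniform deviation).} We apply the standard bound that, with probability $1-\delta$, $\mathbb{E}L\le \frac1T\sum_t L_t + 2\mathfrak{R}_T + B\sqrt{\log(2/\delta)/(2T)}$ with range $B\lesssim 2GV$ (the constant $6GV$ arises from the two-sided McDiarmid step applied to both the loss and its empirical Rademacher average). For $\mathfrak{R}_T$, we use Talagrand's contraction lemma to peel off the Lipschitz factor $(G+2MV)$. This reduces the problem to the Rademacher complexity of the RKHS gradient class. The key constraint is the last SOC constraint in \eqref{erm_kernel_prop}: $C_k\sqrt{\alpha^\top K_g\alpha}\le\theta_R-\lambda_g\le\lambda_R-\lambda_g$. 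This bounds the RKHS norm by $(\lambda_R-\lambda_g)/C_k$, and the classic kernel bound $\mathfrak{R}_T\le \frac{\text{norm}}{T}\sqrt{\mathrm{Tr}(K)}$ (with $K$ the derivative-kernel Gram matrix, since the class consists of gradients) gives the second term. The linear part $A_\theta$ is absorbed into $G$ and $M$.

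\emph{Step 3 (Markov and assembly).} We apply Markov's inequality to the nonnegative excess $(L^{\mathrm{FO}}_{\hat\theta}-\hat\eta_T)_+$, using $\frac1T\sum_t L_t\le\hat\eta_T$ by optimality. Dividing by $\alpha$ gives the right-hand side of \eqref{mpe_kernel_bound}, and we conclude via Theorem~\ref{mono_theorem} as in Theorem~\ref{qg_theorem}.

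The main obstacle is Step~2. Specifically, we must justify contraction for the vector-valued, sup-type FO loss, which is not a scalar Lipschitz composition. To handle this, we would try the vector contraction inequality (Maurer), or linearize via the dual representation $\sup_y\langle F,u-y\rangle = \langle F,u\rangle + h^\top\lambda$. In addition, the complexity must be tied to the constrained norm $\sqrt{\alpha^\top K_g\alpha}$ rather than $\alpha^\top K\alpha$. Here we accept a constant-level looseness, absorbed into $C_k$.
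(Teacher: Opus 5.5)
Your outline follows the paper's proof almost step for step: McDiarmid twice plus symmetrization giving $3V_m$ with $V_m=2GV$, contraction with the factor $G+2MV$ as in Lemma~\ref{ub_H}, the kernel bound through the cone constraint, Markov, then Theorem~\ref{mono_theorem}. Two of these steps fail as written. The paper's proof takes the same shortcuts, so mirroring it will not close them.

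\textbf{The Markov step.} The decisive gap is Step~3. Step~2 only gives $\mathbb{E}[L^{\mathrm{FO}}_{\hat\theta}(\mathbf{s},\mathbf{u})]\le\hat\eta_T+c$, where $c$ is the complexity-plus-concentration term. Markov on the excess instead needs $\mathbb{E}[(L^{\mathrm{FO}}_{\hat\theta}-\hat\eta_T)_+]\le c$. Since $\mathbb{E}[(L-\hat\eta_T)_+]=\mathbb{E}[L]-\hat\eta_T+\mathbb{E}[(\hat\eta_T-L)_+]$, the last term is left uncontrolled. Markov applied to $L^{\mathrm{FO}}_{\hat\theta}$ itself gives only $(\hat\eta_T+c)/(\hat\eta_T+\alpha)$. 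This exceeds $c/\alpha$ whenever $\hat\eta_T>0$ and $\alpha>c$, i.e.\ in every non-vacuous case with nonzero training loss. So your argument is valid only when $\hat\eta_T=0$.

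A repair is to apply the uniform deviation bound directly to the excess class $\{z\mapsto(\xi(\theta,z)-\eta)_+ : \theta,\ \eta\in[0,V_m]\}$. Because $(\cdot)_+$ is $1$-Lipschitz, contraction bounds its empirical Rademacher complexity by $\hat R_T(\mathfrak H)+V_m/\sqrt T$. Its empirical mean at $(\hat\theta,\hat\eta_T)$ vanishes, since $\hat\eta_T\ge\|\epsilon\|\ge\max_t\epsilon_t\ge\max_t L^{\mathrm{FO}}_{\hat\theta}(\tilde s_t,\tilde u_t)$, the last inequality by weak duality. This uses the maximum of the training losses, not the average you invoke. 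You then obtain the claim with an extra term of order $GV/(\alpha\sqrt T)$.

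\textbf{The contraction constant.} In Step~2, $G+2MV$ is the Lipschitz constant of $u\mapsto L^{\mathrm{FO}}_\theta(s,u)$, which is what Lemma~\ref{ub_H} computes. Contraction instead needs the loss to be Lipschitz in the \emph{value} $v=f(\tilde s,\tilde u)\in\mathbb R^n$. Your own inequality shows that $v\mapsto\sup_{y\in\mathcal U(\tilde s)}\langle v,\tilde u-y\rangle$ is $2V$-Lipschitz, not $(G+2MV)$-Lipschitz. Because this map is vector-valued, the scalar contraction invoked in Lemma~\ref{ub_H} does not apply.

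Maurer's vector contraction is the right tool. It gives a factor $2\sqrt2V$ times a vector Rademacher average. For the RKHS part, the derivative reproducing property $\partial_k g(u)=\langle g,\partial_{x_k}k(u,\cdot)\rangle_{\mathcal H_k}$ bounds that average by $\|g_\theta\|_{\mathcal H_k}\sqrt{\mathrm{Tr}(K)}/T$, with $K$ the derivative Gram matrix. So your worry about $K$ versus $K_g$ is unfounded: the cone constraint controls exactly $\|g_\theta\|_{\mathcal H_k}$, with no looseness.

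What this route does not give is the factor $G+2MV$. There are two further problems:
\begin{itemize}
\item The linear part $A_\theta u$ is part of the class whose complexity is measured, so it cannot be ``absorbed into $G$ and $M$''. Its contribution is finite only under an a priori bound on $\|S_\theta\|$, which $\Omega_F$ does not impose.
\item $G$ and $M$ depend on the learned $\hat\theta$ through $\|A_\theta\|$ and $\sum_j|\alpha^j_\theta|$. Without class-wide bounds they cannot serve as the McDiarmid range or the contraction constant.
\end{itemize}

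\textbf{Averaging over test points.} This step needs a union bound. A per-point failure probability $p$ yields $\mathbf e_T\le\sqrt{(\hat\eta_T+\alpha)/\zeta}$ only with probability $1-T'p$, unless $T'=1$.
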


\textbf{Proof.} Similar to the proof of Theorem \ref{qg_theorem}, we first want to formalize a generalization bound for the empirical loss function in \eqref{erm_kernel}. However, we have modeled the FO loss using a class of functions from an RKHS, an infinite-dimensional function space. Here, Rademacher complexity is used to measure the complexity of this class of functions. Let us first define the function class:
\begin{align} \label{main_fc}
    \mathfrak{H}:= \{(\Tilde{s},\Tilde{u})\mapsto \xi(f,\tilde{s},\tilde{u})\;|\; f\in \mathfrak{F}\}
\end{align}
where $\xi(\cdot)$ represents the FO loss in \eqref{fo_loss}, and the function class $\mathfrak{F}$ is defined as $\mathfrak{F}:=\{(\Tilde{s},\Tilde{u})\mapsto f(\theta,\tilde{s},\tilde{u})\;|\; \theta \in \Omega_F, \nabla^2_u \phi_{\theta}(\Tilde{u}_t)\succeq \lambda_{g}I_n,\; \forall t \in [T] \}$, where $f(\cdot)$ represents the pseudogradient operator. Let us further denote the random variable $(\tilde{s},\Tilde{u})=z$, and $S:= (z_1,z_2,\dots z_T)$ be the i.i.d sample. Consider the range of every function belonging to $\mathfrak{H}$ is contained in $[0,V_m]$. Then, the empirical Rademacher complexity for $\mathfrak{H}$ is given by $\hat{R}_T(\mathfrak{H})= \mathbb{E}_{\epsilon}[\sup_{h \in \mathfrak{H}} \frac{1}{T}\sum_{i=1}^T \epsilon_ih(z_i)]$, where $h \in \mathfrak{H}$, $\epsilon_i \in \{-1,1\}$ are independent random variables. Using $\hat{R}_T$, we want to obtain a generalization bound for $h(z)$. To this end, let $R(h) = \mathbb{E}[h(z)]$ and $\hat{R}_S(h) = \frac{1}{T} \sum_{i=1}^T h(z_i)$. Define the supremum of the generalization gap as $\Phi(S) = \sup_{h \in H} \left( R(h) - \hat{R}_S(h) \right)$. We proceed in four steps.

\textit{(i) Concentration of $\Phi(S)$:} 
Changing a single sample $z_i$ to $z_i'$ alters $\hat{R}_S(h)$ by at most $V_m/T$. Thus, the supremum $\Phi(S)$ changes by at most $V_m/T$. By McDiarmid's inequality, with probability at least $1-\delta/2$
\begin{equation} \label{eq:mcdiarmid_phi}
\Phi(S) \leq \mathbb{E}_S[\Phi(S)] + V_m \sqrt{\frac{\log(2/\delta)}{2T}}
\end{equation}
\textit{(ii) Symmetrization:} 
Introduce a ghost sample $\tilde{S} = \{\tilde{z}_1, \dots, \tilde{z}_T\}$, and using Jensen's inequality and the subadditivity of the supremum 
\begin{align*}
\mathbb{E}_S[\Phi(S)] &= \mathbb{E}_S \left[ \sup_{h \in H} \left( \mathbb{E}_{\tilde{S}}[\hat{R}_{\tilde{S}}(h)] - \hat{R}_S(h) \right) \right] \\
&\leq \mathbb{E}_{S, \tilde{S}, \epsilon} \left[ \sup_{h \in H} \frac{1}{T} \sum_{i=1}^T \epsilon_i (h(\tilde{z}_i) - h(z_i)) \right]
\leq 2 \mathbb{E}_{S, \epsilon} \left[ \sup_{h \in H} \frac{1}{T} \sum_{i=1}^T \epsilon_i h(z_i) \right] = 2 R_T(\mathfrak{H})
\end{align*}
where $R_T(\mathfrak{H}) = \mathbb{E}_S[\hat{R}_T(\mathfrak{H})]$ is the expected Rademacher complexity.

\textit{(iii) Concentration of $\hat{R}_T(\mathfrak{H})$:} 
Changing one point in $S$ alters the empirical Rademacher complexity $\hat{R}_T(\mathfrak{H})$ by at most $V_m/T$. Applying McDiarmid's inequality again, with probability at least $1-\delta/2$
\begin{equation} \label{eq:mcdiarmid_rademacher}
R_T(\mathfrak{H}) \leq \hat{R}_T(\mathfrak{H}) + V_m \sqrt{\frac{\log(2/\delta)}{2T}}
\end{equation}
\textit{(iv) Union Bound:} 
By the union bound, equations \eqref{eq:mcdiarmid_phi} and \eqref{eq:mcdiarmid_rademacher} hold simultaneously with probability at least $1-\delta$. Substituting \eqref{eq:mcdiarmid_rademacher} into the symmetrization bound, and then into \eqref{eq:mcdiarmid_phi} $\Phi(S) \leq 2 \left( \hat{R}_T(\mathfrak{H}) + V_m \sqrt{\frac{\log(2/\delta)}{2T}} \right) + V_m\sqrt{\frac{\log(2/\delta)}{2T}}= 2 \hat{R}_T(\mathfrak{H}) + 3V_m\sqrt{\frac{\log(2/\delta)}{2T}}$. Since $\Phi(S) = \sup_{h \in \mathfrak{H}} ( R(h) - \hat{R}_S(h) )$, rearranging the inequality, for any $\delta \in (0,1)$, with probability at least $1-\delta$ over an i.i.d. sample $S$, the following holds for all $h \in \mathfrak{H}$:
\begin{align}\label{exp_bound}
    \mathbb{E}[h(z)] \leq \frac{1}{T} \sum_{i=1}^T h(z_i) + 2 \hat{R}_T(\mathfrak{H}) + 3V_m \sqrt{\frac{\log (2/\delta)}{2T}}.
\end{align}
Now we upper bound $\hat{R}_T(\mathfrak{H})$ using Lemma~\ref{ub_H}. Note that $D\leq2V$ by initial assumption in Theorem \ref{nqg_theorem}, and from the proof of Lemma~\ref{ub_H}, we know that $L^{\mathrm{FO}}_\theta\leq 2GV$, which implies we can replace $V_m= 2GV$. Finally, replacing the result of Lemma \ref{ub_H} in \eqref{exp_bound}, and using Assumption~\ref{feas_ass}, 
\begin{align} \label{exp_bd}
    \mathbb{E}[\xi(\hat{\theta},\mathbf{s,u})] \leq \hat{\eta}_T   + \frac{4(G+2MV)(\lambda_R-\lambda_g)}{C_kT} \sqrt{\mathrm{Tr}(K)} + 6GV \sqrt{\frac{\log (2/\delta)}{2T}}
\end{align}
Notice that, conditional on the sample, the event $\mathbf{u}$ is not a $\hat{\eta}_T+\alpha$-approximate solution of VI is equivalent to the event $\xi_T > \hat{\eta}_T+\alpha$. Using Markov's inequality, we can infer from \eqref{exp_bd} that for any $\delta \in (0,1)$, with probability at least $1-\delta$ over an i.i.d. sample $S$
\begin{align}\label{eq_pb}
    \mathbb{P}\Big[\langle F_{\hat{\theta}}(\mathbf{s,u}),x-\mathbf{u} \rangle \geq -(\hat{\eta}_T+\alpha),\;\forall x \in \mathcal{U}(\mathbf{s}) \Big ] \geq 1 - \bigg(\frac{4(G+2MV)(\lambda_R-\lambda_g)}{C_kT\alpha} \sqrt{\mathrm{Tr}(K)} 
    \\ \nonumber
    + \, 6GV \sqrt{\frac{\log (2/\delta)}{2T\alpha^2}}\bigg )
\end{align}
In \eqref{eq_pb}, we replace $(\mathbf{s,u})$ with a testing datapoint $(s^*_j,u^*_j)$, and using contradiction if $\hat{u}_j$ is not $\sqrt{(\hat{\eta}_T+\alpha)/\zeta}$-near to $u^*_j$, then Theorem \ref{mono_theorem} states the event $\epsilon_T > (\hat{\eta}_T+\alpha)$ occurs with probability $\delta$. Extending the same argument for all $T'$ testing datapoints, the mean prediction error ($\mathbf{e}_T$) is bounded as \eqref{mpe_kernel_bound}. This concludes the proof. \hfill $\blacksquare$
\begin{rem}[Bounds comparison]
    We compare our bound with a similar bound in \cite[Theorem $8$]{bertsimas2015data}. We specifically compare the part $\frac{4(G+2MV)(\theta^*_R-\lambda_g)}{C_kT\alpha} \sqrt{\mathrm{Tr}(K)}$ in \eqref{mpe_kernel_bound} with the part $\frac{4p\overline{B}^p}{\alpha^p \sqrt{T} }$ in \cite[Theorem $8$]{bertsimas2015data}. For square-exponential kernel, $\mathrm{Tr}(K)=\mathcal{O}(T)$, which confirms that both bounds achieve an overall rate of $\mathcal{O}(1/\sqrt{T})$.  
\end{rem}
\section{Inverse formulation for linear-Quadratic dynamic games} \label{dg_game_section}
Here, we extend our inverse game formulation from a static to a dynamic setup. In particular, we consider the finite-horizon open-loop Nash equilibrium of linear-quadratic dynamic games.  
\subsection{Finite-horizon linear-quadratic dynamic game}
Let us consider a linear time-invariant (LTI) dynamical system of the form
\begin{align}\label{LTI}
    x_{t+1}= Ax_t + \sum_{i \in \mathcal{I}} B^iu^i_t,
\end{align}
which is governed by inputs from self-interested agents indexed by $i \in \mathcal{I}$, and where $A\in \mathbb{R}^{n_x\times n_x}$ is the state matrix, and $B^i\in \mathbb{R}^{n_x\times n_u^i}$ is the individual input matrix. Without loss of generality, consider for any $i,j\in \mathcal{I}$, $n^i_u=n^j_u$. The state $x_t \in \mathbb{R}^{n_x}$ is global and fully observable to all agents, whereas $u^i_t \in \mathbb{R}^{n_u^i}$ denotes the local control input from agent $i$. Each agent $i$ optimizes a local quadratic cost function defined over a finite horizon $T$:
\begin{align}\label{obj_unknown}
    J^i(x_t,u^i_t,u^{-i}_t) :=\sum_{k=0}^{T-1} \frac{1}{2}\Big[ x^{\top}_k Q^ix_k+ (q^i)^{\top}x_k +(u^i_k)^{\top}R^iu^i_k \Big] 
\end{align}
where $Q^i\in \mathbb{S}^{n_x}_{\succ 0}$ and $R^i\in \mathbb{S}^{n^i_u}_{\succ 0}$ are symmetric positive-definite penalty matrices for the states and local inputs, respectively.
In addition, the control input sequence for each agent must satisfy local constraints such that $u^i_k \in \mathcal{U}^i(s), \forall k \in \mathcal{T}:= \{0,1,\dots T-1\}$, $s \in \mathbb{S}$. Compared to \cite{benenati2025linear}, here we consider a standard formulation without an augmented state space. We formally define the finite-horizon linear-quadratic game \cite{bacsar1998dynamic,hall2025stability} as a forward problem ($\mathcal{P}_F(x_0)$) parameterized by the initial state $x_0$:
\begin{align} \label{rec_game}
    \forall i \in \mathcal{I}: \begin{cases}
         \min\limits_{u^i} &    J^i(x_0,u^i,u^{-i}), \\
        \:\text{s.t.}   & x_{k+1}= Ax_k + \sum_{i \in \mathcal{I}} B^iu^i_k, \forall k \in \mathcal{T},\\
         &u^i_k \in \mathcal{U}^i(s), \:\forall k \in \mathcal{T},
    \end{cases}
\end{align}
A standard methodology for determining the finite-horizon open-loop Nash equilibrium of the above problem is to reformulate it as a Variational Inequality (VI) problem \cite{belgioioso2022distributed}, such that the optimal joint action sequence ($u^*$) satisfies $ F(u^*;x_0) + \mathcal{N}_{\mathcal{U}}(u^*) \ni 0,$ where the pseudogradient operator is $F(u;x_0):= \text{col}(\nabla_{u^i} J^i(x_0,u^i,u^{-i}))_{i\in \mathcal{I}}$, and $\mathcal{N}_{\mathcal{U}}$ represents the normal cone of the collective decision set $\mathcal{U}(s):= \prod_{i\in \mathcal{I}} \mathcal{U}^i(s)$. In the context of the linear-quadratic dynamic games, the operator $F(u;x_0)$ admits the following affine structure:
\begin{align} \label{operator_eq}
    F(u;x_0) = \left( \Tilde{R} +  \mathbf{M}_u \right)u +  \mathbf{M}_x x_0 + \frac{1}{2}\mathbf{m}_q,
\end{align}
where $\Tilde{R} = \text{diag}(I_T \otimes R^1, \dots, I_T \otimes R^N)$, and $(i,j)$-th sub-blocks (or $i$-th sub-vectors) of the block matrices $\mathbf{M}_u$, $\mathbf{M}_x$, and $\mathbf{m}_q$ are given by $[\mathbf{M}_u]_{ij} = (\Tilde{B}^i)^{\top}\Tilde{Q}^i \Tilde{B}^j, \; \forall i,j \in \mathcal{I},\;[\mathbf{M}_x]_{i} = (\Tilde{B}^i)^{\top} \Tilde{Q}^i \Tilde{A},\;[\mathbf{m}_q]_{i} = (\Tilde{B}^i)^{\top} \Tilde{q}^i,$ with auxiliary matrices defined as $\Tilde{Q}^i= \text{diag}(Q^i, \dots, Q^i)$ repeated $T+1$ times, $\Tilde{A}= [I,A,A^2, \dots , A^T]^{\top}$, $\Tilde{q}^i = \text{col}(q^i)_{k=0}^T$, and $\Tilde{B}^i$ formulated as the standard lower-block-triangular convolution matrix associated with $A$ and $B^i$. Specific to our forward problem $\mathcal{P}_F$ in \eqref{rec_game}, in addition to Assumption \ref{feas_ass}, let us impose the following standing assumption.
\begin{assum}[Monotone operator]\label{uniquness_assump} The operator $F(\cdot,x_0)$ in \eqref{operator_eq} is $\mu$-strongly monotone for any initial state $x_0$, for some $\mu >0$.
\end{assum}
Under Assumption \ref{uniquness_assump}, \cite{hall2025stability} studies the uniqueness of the repeated Nash equilibria \cite[Prop. $1$]{hall2025stability} and convergence of the resulting closed-loop system to an asymptotically stable equilibrium \cite[Theorem $1$]{hall2025stability}, if there exists a positive definite matrix $P \in \mathbb{S}^{n_x}_{\succ 0}$ and a scalar $\eta > 0$ satisfying a linear matrix inequality (LMI) condition \cite[Eq. $(21)$]{hall2025stability} dependent on the system dynamics and the pseudogradient operator. Similar to our previous analysis, we parameterize the pseudogradient operator as $F_{\theta}$, and define the hypothesis space: $\Omega_D:= \{Q^i_{\theta}\in \mathbb{S}^{n_x}_{\succ 0}, R^i_{\theta}\in \mathbb{S}^{n_u}_{\succ 0}, q^i_{\theta}  \in \mathbb{R}^{n_x} : \lambda_Q I_{n_x} \preceq Q^i_{\theta} \preceq \lambda'_Q I_{n_x}, \: \lambda_R I_{n_u} \preceq R^i_{\theta} \preceq \lambda'_R I_{n_u}, \: \forall i \in \mathcal{I} \}$, with strictly positive scalars $\lambda < \lambda'$ chosen by the designer. Furthermore, to ensure the learned $F_{\theta}$ is strongly monotone, we restrict its symmetric Jacobian, $\frac{1}{2}(\mathbf{J}_u + \mathbf{J}_u^{\top}) \succeq \mu I$. From \eqref{operator_eq}, the Jacobian evaluates to $\mathbf{J}_u = \nabla_u F_{\theta}(u,\boldsymbol{x}) = \Tilde{R}_{\theta} +  \mathbf{M}_u$. Consequently, the strong monotonicity constraint simplifies to the LMI $\Tilde{R}_{\theta} + \frac{1}{2}\Big(\mathbf{M}_u(\theta) + \mathbf{M}_u^{\top}(\theta)\Big) \succeq \mu I,$ where the $(i,j)$-th block of the symmetric sum $\mathbf{M}_u(\theta) + \mathbf{M}_u^{\top}(\theta)$ is explicitly evaluated as $(\Tilde{B}^i)^{\top}(\Tilde{Q}^i_{\theta}+\Tilde{Q}^j_{\theta}) \Tilde{B}^j$. With this, we propose the following empirical loss minimization problem as the inverse formulation. 
\begin{align}\label{erm_dg}
    \min_{\theta \in \Omega_D} \quad & \sum_{t=1}^T \; L^{\mathrm{FO}}_{\theta}(\Tilde{s}_t,\Tilde{u}_t), \\ \nonumber
    \mathrm{s.t.}\quad & \Tilde{R}_{\theta} + \frac{1}{2}\Big( \mathbf{M}_u(\theta) + \mathbf{M}_u^{\top}(\theta) \Big) \succeq \mu I.
\end{align}
We provide the following tractable convex reformulation of \eqref{erm_dg}. 
\begin{prop}[Exact reformulation] \label{dg_formulation}
    Consider the optimization problem \eqref{erm_dg} where the candidate operator $F_{\theta}$ admits \eqref{operator_eq}, and let Assumption \ref{feas_ass} hold. Then the optimization problem \eqref{erm_dg} is equivalent to the following convex program. 
    \begin{subequations} \label{erm_dg_prop}
        \begin{align}
            \min_{\substack{ \theta \in \Omega_D, \{\epsilon_t\geq 0, \\
            \lambda_t\geq 0\}_{t\in [T]}}} \quad & ||\epsilon||, \\
            \mathrm{s.t.} \qquad & F_{\theta}(\Tilde{s}_t,\Tilde{u}_t)= \left( \Tilde{R}  +  \mathbf{M}_u(\theta) \right)\tilde{u}_t  \\ \nonumber
            & +  \mathbf{M}_x \Tilde{s}_t + \frac{1}{2}\mathbf{m}_q(\theta) \;\; \; \forall t \in [T],\\
            & G(\Tilde{s}_t)^{\top}\lambda_t+F_{\theta}(\Tilde{s}_t,\Tilde{u}_t)=0,\; \forall t \in [T], \\
            &\Tilde{u}_t^{\top}F_{\theta}(\Tilde{s}_t,\Tilde{u}_t)+h^{\top}(\Tilde{s}_t)\lambda_t \leq \epsilon_t, \; \forall t \in [T],\\
            &\Tilde{R}_{\theta} + \frac{1}{2}\Big(\mathbf{M}_u(\theta) + \mathbf{M}_u^{\top}(\theta)\Big) \succeq \mu I,
        \end{align}
    \end{subequations} 
\end{prop}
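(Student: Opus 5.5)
The reformulation hinges on computing each FO loss term in \eqref{erm_dg} in closed form via linear programming duality. Fix a sample $t\in[T]$ and a parameter $\theta\in\Omega_D$, and write $F_t:=F_\theta(\tilde s_t,\tilde u_t)$. By \eqref{operator_eq} this vector is affine in $(Q^i_\theta,R^i_\theta,q^i_\theta)_{i\in\mathcal I}$ for fixed data. Indeed, $\tilde R_\theta$ is linear in $R^i_\theta$, $\mathbf M_u(\theta)$ and $\mathbf M_x(\theta)$ are linear in $\tilde Q^i_\theta$ because $\tilde B^i,\tilde A$ are fixed, and $\mathbf m_q(\theta)$ is linear in $q^i_\theta$. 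This gives the first (equality) constraint of \eqref{erm_dg_prop}, which simply introduces $F_t$ as an auxiliary variable.

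By Definition~\ref{fo_loss_defn},
\[
L^{\mathrm{FO}}_\theta(\tilde s_t,\tilde u_t)=F_t^\top\tilde u_t-\inf_{y\in\mathcal U(\tilde s_t)}F_t^\top y .
\]
Under Assumption~\ref{feas_ass}(ii)--(iii), the collective set $\mathcal U(\tilde s_t)=\{y: G(\tilde s_t)y\le h(\tilde s_t)\}$ is a nonempty bounded polyhedron, obtained by stacking the $G_i,h_i$ block-diagonally over agents and horizon stages. The inner problem is therefore a feasible LP with a finite optimum, and strong LP duality gives
\[
\inf_{y}\{F_t^\top y: G y\le h\}=\max_{\lambda\ge0}\{-h^\top\lambda: G^\top\lambda+F_t=0\}.
\]
Hence $L^{\mathrm{FO}}_\theta(\tilde s_t,\tilde u_t)=\min_{\lambda_t\ge0,\;G^\top\lambda_t+F_t=0}\big(\tilde u_t^\top F_t+h^\top\lambda_t\big)$. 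The minimum is attained and is nonnegative, since $\tilde u_t\in\mathcal U(\tilde s_t)$.

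Next I substitute this into \eqref{erm_dg} and merge the inner minimizations with the outer one, replacing $\min_\theta\sum_t\min_{\lambda_t}$ by a joint minimization over $(\theta,\{\lambda_t\})$. Epigraph variables $\epsilon_t$ then yield the constraint $\tilde u_t^\top F_t+h^\top\lambda_t\le\epsilon_t$. With $\epsilon_t\ge0$, minimizing $\sum_t\epsilon_t$ (that is, $\|\epsilon\|$ read as the $\ell_1$ norm) forces the inequalities to be tight at the optimum. The two problems therefore share optimal $\theta$ and optimal values. If $\|\cdot\|$ is meant to be another monotone norm, the equivalence is in the minimizer sense exactly as in \eqref{sample_prog}, and I would remark on this. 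The strong monotonicity LMI carries over unchanged, since it involves only $\theta$.

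The remaining step is to verify convexity. The bilinear-looking products are harmless: $\tilde u_t^\top F_t$ is affine in $\theta$ because $\tilde u_t$ is data, and $h^\top\lambda_t$ and $G^\top\lambda_t$ are linear. The LMI is linear in $(\tilde Q_\theta,R_\theta)$, and $\Omega_D$ is described by LMIs. I expect the main obstacle to be bookkeeping rather than conceptual: one must confirm that the cross term $\mathbf M_x\tilde s_t$, with $\tilde s_t=x_0$, stays affine in $Q_\theta$; the statement writes $\mathbf M_x$ without $\theta$, but it does depend on $\tilde Q^i_\theta$ linearly. One must also check that the stacked $G(\tilde s_t)$ indeed describes $\prod_i\prod_k\mathcal U^i(s)$, so that the duality step applies verbatim.
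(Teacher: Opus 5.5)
Your argument is correct and is essentially the paper's proof. The paper just invokes the Lagrangian/LP duality step \eqref{duality_eq} from the proof of Theorem~\ref{kernel_formulation}, with an epigraph bound $\epsilon_t\ge L^{\mathrm{FO}}_\theta$. You also supply details it omits: strong duality from the nonempty bounded polytope, nonnegativity of the loss since $\tilde u_t\in\mathcal U(\tilde s_t)$, tightness of the epigraph, and the $\theta$-dependence of $\mathbf M_x$.

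One correction to your hedge about other monotone norms. The paper's notation fixes $\|\cdot\|$ as the Euclidean norm. In that case the epigraph argument only shows equivalence with $\min_\theta \|(L^{\mathrm{FO}}_\theta(\tilde s_t,\tilde u_t))_{t\in[T]}\|_2$, whose minimizers generally differ from those of the sum in \eqref{erm_dg}. So the exact equivalence genuinely requires your $\ell_1$ reading, a point the paper's one-line proof does not address either.
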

\begin{pf}
    The proof follows the duality argument introduced in \eqref{duality_eq} in the proof of Theorem \ref{kernel_formulation} in Appendix~\ref{proof_prop_kernel}. \hfill $\blacksquare$
\end{pf}
Note that the initial state ($x_0$) is treated as a signal $\tilde{s}$. One could impose stability of the associated open-loop Nash equilibrium trajectories by imposing \cite[Eq. ($21$)]{hall2025stability}. In our numerical experience, however, this condition is conservative and does not lead to improved solutions. The performance guarantees follow the same line of argument as Theorem \ref{qg_theorem}.
\begin{cor}[Out-of-sample performance] \label{dg_perf_bound}
    Let Assumptions \ref{feas_ass} and \ref{uniquness_assump} hold, then for any $\alpha\in(0,1)$, with probability at least $1- \beta(\alpha)$ with respect to the sampling $\mathbb{P}\left[ \mathbf{e}_T \leq \sqrt{\frac{\overline{\eta}_T}{\mu}}  \right]\geq 1-\alpha$, where $\overline{\eta}_T$ is the optimal value of objective function in \eqref{erm_dg_prop}, and $\beta(\alpha)= \sum^{\mathrm{dim}(\theta)}_{i=0} \binom{T}{i}\alpha^i (1-\alpha)^{T-i}$.
\end{cor}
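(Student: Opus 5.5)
The argument follows the proof of Theorem~\ref{qg_theorem}: first a chance-constraint bound on approximate VI satisfaction via scenario theory, then a conversion into a distance bound via strong monotonicity (Theorem~\ref{mono_theorem}).

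\emph{Step 1: scenario program.} I would cast \eqref{erm_dg_prop} as a scenario program in the spirit of \eqref{sample_prog}. For a data point $\chi=(\tilde s,\tilde u)$, with $\tilde s=x_0$ playing the role of the signal, define
\[
g(\chi):=\{(\epsilon,\theta):\theta\in\Omega_D,\ \exists\lambda\geq0 \text{ with } G(\tilde s)^\top\lambda+F_\theta(\tilde s,\tilde u)=0,\ \tilde u^\top F_\theta(\tilde s,\tilde u)+h(\tilde s)^\top\lambda\leq\epsilon\}.
\]
Here the LMI $\tilde R_\theta+\tfrac12(\mathbf M_u(\theta)+\mathbf M_u(\theta)^\top)\succeq\mu I$ is a deterministic constraint. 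Since $F_\theta$ is affine in $\theta$ (the blocks $(\tilde B^i)^\top\tilde Q^i_\theta\tilde B^j$ and $\mathbf m_q(\theta)$ are linear in $\theta$), each $g(\chi)$ is convex.

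Using a common scalar $\epsilon$ (the max-norm version of $\|\epsilon\|$, which is what is needed to have a single $\overline\eta_T$ valid for a fresh scenario), Assumption~\ref{feas_ass}(i) and the result of \cite{campi2008exact} give
\[
\mathbb P[V(\overline\eta_T,\hat\theta)>\alpha]\leq\beta(\alpha),
\]
with $\beta$ depending on $\dim(\theta)$ as stated.

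\emph{Step 2: from dual feasibility to an approximate VI.} By the LP duality used in the proof of Proposition~\ref{dg_formulation} (Assumption~\ref{feas_ass}(ii)--(iii) ensure a bounded nonempty polytope, so strong duality holds), membership $(\overline\eta_T,\hat\theta)\in g(\mathbf s,\mathbf u)$ is equivalent to
\[
\sup_{y\in\mathcal U(\mathbf s)}\langle F_{\hat\theta}(\mathbf s,\mathbf u),\mathbf u-y\rangle\leq\overline\eta_T .
\]
Hence, with confidence $1-\beta(\alpha)$,
\[
\mathbb P\big[\langle F_{\hat\theta}(\mathbf s,\mathbf u),x-\mathbf u\rangle\geq-\overline\eta_T,\ \forall x\in\mathcal U(\mathbf s)\big]\geq1-\alpha,
\]
which is the analogue of Lemma~\ref{sp_lemma}.

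\emph{Step 3: from an approximate VI to a prediction error bound.} The LMI constraint makes $F_{\hat\theta}(\cdot;x_0)$ $\mu$-strongly monotone for every $x_0$. The Jacobian does not depend on $x_0$, so this holds uniformly; this is the point where Assumption~\ref{uniquness_assump}, transferred to the learned operator through the constraint, is used. Consequently the predicted equilibrium $\hat u$ of the learned game is unique. Whenever the event in Step 2 holds, Theorem~\ref{mono_theorem} gives
\[
\|\mathbf u-\hat u\|\leq\sqrt{\overline\eta_T/\mu}.
\]
Applying this to each test point and averaging, as in the end of the proof of Theorem~\ref{qg_theorem}, yields the bound on $\mathbf e_T$.

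The main obstacle is Step 1: justifying the scenario framework with the auxiliary multipliers $\lambda_t$. Their number grows with $T$, so I would argue, as in Lemma~\ref{sp_lemma}, that they are eliminated by projection into $g(\chi)$, leaving decision variables $(\epsilon,\theta)$ of fixed dimension. A second point to handle is the mismatch between $\|\epsilon\|$ and a uniform level. I would use $\max_t\epsilon_t\leq\|\epsilon\|$, so the guarantee holds with $\overline\eta_T$ as an upper bound on the uniform level.
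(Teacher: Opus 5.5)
Your proposal is correct relative to the paper's argument and follows the same route: a Campi--Garatti scenario bound as in Lemma~\ref{sp_lemma} with the multipliers $\lambda_t$ projected out (so the decision variable $(\epsilon,\theta)$ has dimension $\dim(\theta)+1$, matching $\beta(\alpha)$), then LP duality to read non-violation as an $\overline{\eta}_T$-approximate VI, and finally Theorem~\ref{mono_theorem} with the $\mu$ enforced by the LMI in \eqref{erm_dg_prop}. There are two caveats: your reconciliation remark $\max_t\epsilon_t\le\|\epsilon\|$ does not transfer the scenario guarantee to the minimizer of a non-$\ell_\infty$ norm, because that guarantee attaches to the scenario optimizer itself, so the argument is rigorous only under the common-scalar reading you adopt in Step~1 (also the reading the paper implicitly uses); and, as in the paper, passing from the per-test-point event to the average $\mathbf{e}_T$ over $T'$ points at the same level $1-\alpha$ is immediate only for $T'=1$, since in general a union bound yields only $1-T'\alpha$.
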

\begin{pf}
    The proof follows the same line as that of Theorem \ref{qg_theorem}. Specifically, we formalize a generalization bound as in Lemma \ref{sp_lemma}, then, using the strong monotonicity property of $F_{\theta}$, we invoke Theorem \ref{mono_theorem} to finalize the probabilistic performance bound. \hfill $\blacksquare$
\end{pf}
\begin{rem}[Limitations of the offline guarantee] 
Corollary \ref{dg_perf_bound} provides a distribution-free guarantee for i.i.d. episodic data, suited to batch settings with independent encounters. In continuous ego-centric deployments, however, sliding trajectory windows are temporally correlated through the dynamics, violating the i.i.d. assumption. Finite-sample guarantees for such dynamically correlated data remain an open problem. 
\end{rem}
\subsection{Robust receding horizon learning} \label{noise_section}
While the offline formulation provides a foundation for inverse dynamic games, sequential deployment requires robust learning under sensor and process noise. We therefore seek to update $F_\theta$ in a receding-horizon manner without unbounded error accumulation or growing computational complexity. At time $t$, the observer processes the sliding window $\mathcal{D}_t=\{(\tilde{s}_t,\tilde{u}_t)\}$, where $\tilde{u}_t=[\tilde{u}(t),\ldots,\tilde{u}(t-T+1)]$ and the oldest sample is discarded to maintain length $T$. Consequently, the number of LMIs per update remains constant over time, enabling tractable online implementation.
\begin{assum} [Gaussian measurement noise] \label{noise_assumption}
    The vectorized observation entering the dataset $\mathcal D_t$ is of the form $ y_t=\bar y_t+\varrho_t, \; \varrho_t\sim\mathcal{N}(0,\sigma^2 I), $ where $\bar y_t$ denotes the noise-free observation and the variance is strictly bounded by a known hardware or process constant $0<\sigma^2 \leq \sigma^2_m$. Moreover, the measurement noise is independent across time.
\end{assum}
To reject this noise while accurately tracking the underlying parameter drift, we introduce a robust Tikhonov-regularized online update scheme. To ensure that temporal regularization is imposed only on the physically meaningful game parameters, we separate the structural parameter $\theta$ from the auxiliary variables of the inverse game problem. Let $ \nu := \operatorname{col} \big( (\lambda^j,\epsilon^j)_{j\in[T]} \big) $ collect the dual and epigraph variables appearing in~\eqref{erm_dg_prop}. For each fixed $\theta$, we rewrite~\eqref{erm_dg_prop} in terms of the following profiled empirical loss 
\begin{equation} \label{eq:profiled-loss} 
\widetilde{\mathcal L}_t(\theta) := \inf_{\nu} \left\{ \widetilde J_t(\theta,\nu) \;\middle|\; (\theta,\nu)\in\mathcal Z(\mathcal D_t) \right\}, 
\end{equation} 
with the convention that $\widetilde{\mathcal L}_t(\theta)=+\infty$ whenever no feasible auxiliary variable $\nu$ exists, and the feasibility set $\mathcal{Z}$ enforces the constraints of \eqref{erm_dg_prop}. The domain of the profiled loss is therefore $ \Theta_t := \left\{ \theta \;\middle|\; \exists\,\nu: (\theta,\nu)\in\mathcal Z(\mathcal D_t) \right\}. $ Let 
\begin{equation} \label{eq:unregularized-profiled-estimator} 
\widehat\theta_t \in \arg\min_{\theta\in\Theta_t} \widetilde{\mathcal L}_t(\theta) 
\end{equation} 
denote the unregularized estimator. We propose the following Tikhonov-regularized update:  
\begin{align}\label{eq:regularized-profiled-problem}
\theta_t \in \arg\min_{\theta\in\Theta_t}\quad& \widetilde{\mathcal L}_t(\theta) + \rho_t \|\theta-\theta_{t-1}\|_{\mathrm{Fr}}^2, 
\end{align} 
where $\rho_t>0$ is an adaptive penalty. Equivalently, without explicitly evaluating the profiled loss, \eqref{eq:regularized-profiled-problem} can be implemented as 
\begin{subequations} \label{eq:regularized-full-problem} 
\begin{align} \min_{\theta,\nu}\quad& \widetilde J_t(\theta,\nu) + \rho_t \|\theta-\theta_{t-1}\|_{\mathrm{Fr}}^2, \label{eq:regularized-full-problem-a}\\ 
\text{s.t.}\quad& (\theta,\nu)\in\mathcal Z(\mathcal D_t). \label{eq:regularized-full-problem-b} 
\end{align} 
\end{subequations} 
Thus, the auxiliary variables are re-optimized at each time step but are not regularized across successive windows. While standard approaches treat $\rho_t$ as a static, heuristically tuned hyperparameter, we seek an adaptive weight that is optimal in the sense of Bayesian inference.
To formalize this, let us consider the following assumptions. 
\begin{assum}[Local regularity] \label{ass:local-MAP} 
Let $\theta_t^\star$ denote the minimizer of the noise-free profiled inverse problem associated with $\bar y_t$. The following conditions hold locally around $\theta_t^\star$ and $\widehat\theta_t$. 
\begin{enumerate}\renewcommand{\labelenumi}{(\roman{enumi})}
\item \emph{Strong regularity.} For the unregularized inverse problem in~\eqref{eq:unregularized-profiled-estimator}, the optimizer mapping $\mathcal A_t:y_t\twoheadrightarrow\widehat\theta_t $ is single-valued and continuously differentiable in a neighborhood of $\bar y_t$. 
\item \emph{Locally invariant active set.} For sufficiently small $\sigma_m$, the active constraint set at the unregularized inverse solution in~\eqref{eq:unregularized-profiled-estimator} does not change under perturbations of $y_t$ in a neighborhood of $\bar y_t$. 
\item \emph{Locally normalized profiled loss.} The profiled loss $\widetilde{\mathcal L}_t$ is twice continuously differentiable in a neighborhood of $\widehat\theta_t$, and its local curvature is normalized according to $ \nabla_\theta \widetilde{\mathcal L}_t(\widehat\theta_t) =0, \;\mathrm{and}\; \nabla_\theta^2 \widetilde{\mathcal L}_t(\widehat\theta_t) =I. $ 
\item \emph{Gaussian parameter evolution.} Conditional on the information available at time $t-1$, the true parameter follows $ \theta_t^\star = \theta_{t-1}+w_t, \; w_t\sim\mathcal N(0,\sigma_{s,t}^2I), $ where $\sigma_{s,t}^2>0$, and the process disturbance $w_t$ is conditionally independent with measurement noise $\varrho_t$. 
\end{enumerate}
\end{assum} 
Assumption~\ref{ass:local-MAP}(iii) is a local normalization condition rather than a global restriction on the empirical loss. More generally, if the profiled loss has a positive-definite reduced Hessian $ H_t := \nabla_\theta^2 \widetilde{\mathcal L}_t(\widehat\theta_t) \succ0, $ then Taylor's theorem gives 
\begin{align} \label{eq:local-loss-expansion}
    \widetilde{\mathcal L}_t(\theta) = \widetilde{\mathcal L}_t(\widehat\theta_t) + \frac{1}{2} \|\theta-\widehat\theta_t\|_{H_t}^2 + o\!\left( \|\theta-\widehat\theta_t\|_{\mathrm{Fr}}^2 \right).
\end{align}
Hence, the condition $\nabla_\theta^2\widetilde{\mathcal L}_t(\widehat\theta_t)=I$ can be understood as working in locally curvature-normalized parameter coordinates. Let $ S_t := D_{\overline{y}_t}\mathcal A_t(\bar y_t) $ denote the sensitivity of the unregularized estimator with respect to the vectorized observation, where $D_{\overline{y}_t}(\cdot)$ denotes the Jacobian operator. Under Assumption~\ref{ass:local-MAP}, a first-order perturbation gives $ \widehat\theta_t-\theta_t^\star = S_t\varrho_t + o_{\mathbb P}(\|\varrho_t\|). $ The exact local covariance of this perturbation is generally anisotropic and is given by $ C_t = \sigma_m^2S_tS_t^\top$. Since $S_tS_t^\top$ is symmetric, $C_t$ admits the exact decomposition into isotropic and traceless anisotropic components as $C_t= \sigma_{\mathrm{meas},t}^2I+ (\sigma_m^2S_tS_t^\top-\sigma_{\mathrm{meas},t}^2I),$ where $\sigma_{\mathrm{meas},t}^2 := \frac{\sigma_m^2}{d_\theta} \operatorname{Tr}(S_tS_t^\top)$, and $d_\theta:=\dim(\theta)$. We retain the isotropic component and neglect the traceless anisotropic component in order to obtain a scalar Tikhonov weight. Thus, we consider   
\begin{equation} \label{eq:scalar-measurement-variance} 
C_t= \sigma_{\mathrm{meas},t}^2I.  
\end{equation} 
The division by $d_\theta$ in $\sigma_{\mathrm{meas},t}^2$ ensures that the isotropic approximation preserves the total propagated variance: $\operatorname{Tr}(C_t) = d_\theta\sigma_{\mathrm{meas},t}^2. $ Furthermore, let us consider the maximum a posteriori (MAP) estimate of $\theta^\star_t$ is given by
\begin{align} \label{map_formula}
    \theta^{\mathrm{MAP}}_t = \arg\min_{\theta \in \Theta_t} \;-\log\Big(p(\theta\mid \hat{\theta}_t,\theta_{t-1})\Big),
\end{align}
where $p(\theta^\star\mid \hat{\theta}_t,\theta_{t-1})$ denotes the density function of the posterior distribution following Bayes' rule. 
\begin{prop}[Local MAP equivalence] \label{prop:MAP-equivalence} Suppose that Assumptions~\ref{noise_assumption} and \ref{ass:local-MAP} hold, and the propagated covariance of the unregularized estimator is given by~\eqref{eq:scalar-measurement-variance}. Then, in the small-noise regime $\sigma_m\rightarrow0$, the MAP of $\theta_t^\star$ in~\eqref{map_formula} is locally, up to second-order terms, equivalent to the solution of the Tikhonov-regularized problem \eqref{eq:regularized-profiled-problem}, provided that 
\begin{equation} \label{eq:adaptive-rho} 
\rho_t = \frac{ \sigma_{\mathrm{meas},t}^2 }{ 2\sigma_{s,t}^2 }. \end{equation} 
\end{prop}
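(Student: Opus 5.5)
The plan is to write the negative log-posterior in \eqref{map_formula} via Bayes' rule as the sum of a likelihood term and a prior term. We then identify each with a term of \eqref{eq:regularized-profiled-problem}, up to additive constants and second-order remainders.

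\textbf{Step 1 (posterior factorization).} By Bayes' rule,
\[
p(\theta\mid\widehat\theta_t,\theta_{t-1})\propto p(\widehat\theta_t\mid\theta,\theta_{t-1})\,p(\theta\mid\theta_{t-1}).
\]
Conditionally on $\theta_t^\star=\theta$, the estimator $\widehat\theta_t$ depends only on the measurement noise $\varrho_t$. By Assumption~\ref{ass:local-MAP}(iv), $\varrho_t$ is independent of $w_t$, so $p(\widehat\theta_t\mid\theta,\theta_{t-1})=p(\widehat\theta_t\mid\theta)$. The same assumption gives the prior $\theta\mid\theta_{t-1}\sim\mathcal N(\theta_{t-1},\sigma_{s,t}^2I)$, whose negative log-density is $\frac{1}{2\sigma_{s,t}^2}\|\theta-\theta_{t-1}\|_{\mathrm{Fr}}^2+\text{const}$.

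\textbf{Step 2 (likelihood).} Assumption~\ref{ass:local-MAP}(i)--(ii) makes $\mathcal A_t$ single-valued and $C^1$ with a fixed active set near $\bar y_t$. The first-order expansion $\widehat\theta_t-\theta_t^\star=S_t\varrho_t+o_{\mathbb P}(\|\varrho_t\|)$ combined with Assumption~\ref{noise_assumption} then gives, as $\sigma_m\to0$, that $\widehat\theta_t\mid\theta$ is asymptotically Gaussian with mean $\theta$ and covariance $C_t$. Using the isotropic surrogate \eqref{eq:scalar-measurement-variance}, the likelihood term becomes $\frac{1}{2\sigma_{\mathrm{meas},t}^2}\|\theta-\widehat\theta_t\|_{\mathrm{Fr}}^2+\text{const}$, where the remainder is of higher order in $\sigma_m$. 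The MAP problem is therefore locally
\[
\min_{\theta\in\Theta_t}\;\frac{1}{2\sigma_{\mathrm{meas},t}^2}\|\theta-\widehat\theta_t\|^2+\frac{1}{2\sigma_{s,t}^2}\|\theta-\theta_{t-1}\|^2 .
\]

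\textbf{Step 3 (matching with Tikhonov).} Assumption~\ref{ass:local-MAP}(iii) and the expansion \eqref{eq:local-loss-expansion} with $H_t=I$ give $\widetilde{\mathcal L}_t(\theta)=\widetilde{\mathcal L}_t(\widehat\theta_t)+\frac12\|\theta-\widehat\theta_t\|^2+o(\|\theta-\widehat\theta_t\|^2)$. Multiplying the MAP objective by the positive constant $\sigma_{\mathrm{meas},t}^2$ does not change its minimizer, and it yields
\[
\tfrac12\|\theta-\widehat\theta_t\|^2+\tfrac{\sigma_{\mathrm{meas},t}^2}{2\sigma_{s,t}^2}\|\theta-\theta_{t-1}\|^2 .
\]
This is the quadratic model of \eqref{eq:regularized-profiled-problem} with $\rho_t$ given by \eqref{eq:adaptive-rho}. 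Both objectives are strongly convex quadratics with the explicit common minimizer $(\widehat\theta_t+2\rho_t\theta_{t-1})/(1+2\rho_t)$, modulo the constraint $\theta\in\Theta_t$. Since the active set is locally invariant, this constraint can be treated as locally inactive, or handled identically in both problems.

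\textbf{Main obstacle.} The delicate step is controlling the remainders so that the two minimizers agree up to second order. I expect it to need two ingredients.

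First, the candidate minimizer must lie in the neighborhood where the Taylor expansions are valid. Here $\|\widehat\theta_t-\theta_t^\star\|=O_{\mathbb P}(\sigma_m)$, and $\rho_t\to0$ as $\sigma_m\to0$ with $\sigma_{s,t}$ fixed. So the minimizer approaches $\widehat\theta_t$, although $\theta_{t-1}$ need not be close.

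Second, a strong-convexity perturbation argument applies. Two objectives that differ by $o(\|\theta-\widehat\theta_t\|^2)$ and have unit curvature have minimizers differing by a term of the same order.

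I would state the equivalence in exactly this local, asymptotic sense. I would also remark that dropping the traceless anisotropic part of $C_t$ is an explicit modelling choice, which is built into the hypothesis \eqref{eq:scalar-measurement-variance} and therefore requires no further proof.
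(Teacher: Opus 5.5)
Your proposal is correct and follows the paper's proof essentially step for step. Both use Bayes' rule with the conditional independence of $w_t$ and $\varrho_t$, a Gaussian likelihood built from the first-order sensitivity $S_t$ and the isotropic surrogate \eqref{eq:scalar-measurement-variance}, the random-walk prior from Assumption~\ref{ass:local-MAP}(iv), and then rescaling by $\sigma_{\mathrm{meas},t}^2$ and substituting the unit-curvature quadratic model of $\widetilde{\mathcal L}_t$ from Assumption~\ref{ass:local-MAP}(iii) to read off $\rho_t$.

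Your explicit common minimizer and your remainder discussion go beyond what the paper writes. One precision point there: an $o(\|\theta-\widehat\theta_t\|^2)$ perturbation of the objective generally shifts the minimizer by only $o(\|\theta-\widehat\theta_t\|)$, because the shift is governed by the gradient of the remainder. It is therefore not a term of the same order as the perturbation itself.
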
 
The proof is given in Appendix~\ref{map_prop_proof}. 
\begin{rem}[Scope of the MAP equivalence] \label{rem:MAP-scope} Proposition~\ref{prop:MAP-equivalence} establishes local asymptotic, not global, equivalence, relying on the first-order optimizer sensitivity, isotropic moment matching in \eqref{eq:scalar-measurement-variance}, and the local quadratic model \eqref{eq:local-loss-expansion}. Active-set changes or optimizer non-differentiability may invalidate the local Gaussian approximation; in this case, \eqref{eq:regularized-profiled-problem} remains a tractable convex estimator, but its MAP interpretation is only heuristic.
\end{rem} 

Analytically computing the mapped variance $\sigma_{\mathrm{meas},t}^2 =\frac{\sigma_m^2}{d_\theta} \mathrm{Tr}(S_t S_t^{\top})$ requires evaluating the full sensitivity matrix $S_t$ that analytically poses a severe computational bottleneck. To circumvent this, we utilize a randomized trace estimator based on Hutchinson's approximation \cite{hutchinson1989stochastic}. Let $\zeta \in \mathbb{R}^{\text{dim}(\overline{y}_t)}$ be a random perturbation vector drawn from the assumed noise distribution $\mathcal{N}(0, \sigma_m^2 I)$. Applying a first-order Taylor expansion around the unperturbed empirical minimizer $\hat{\theta}_t(\overline{y}_t)$, the solution to the perturbed problem is approximated as $\hat{\theta}_t(\overline{y}_t + \zeta) \approx \hat{\theta}_t(\overline{y}_t) + S_t \zeta$. The squared Frobenius deviation between the nominal and perturbed solutions is therefore $\lVert \hat{\theta}_t(\overline{y}_t + \zeta) - \hat{\theta}_t(\overline{y}_t) \rVert_{\text{Fr}}^2 \approx \zeta^{\top} S_t^{\top} S_t \zeta.$ Taking the expectation with respect to the isotropic noise distribution $\zeta$ yields $\mathbb{E}_{\zeta} \Big[ \zeta^{\top} S_t^{\top} S_t \zeta \Big] = \text{Tr}\Big(S_t^{\top} S_t \mathbb{E}_{\zeta}[\zeta \zeta^{\top}]\Big)= \sigma_m^2 \text{Tr}(S_t^{\top} S_t) = d_\theta\sigma_{\mathrm{meas},t}^2.$
Consequently, a single Monte Carlo realization provides an unbiased estimate of the required penalty variance. In practice, the observer evaluates $\sigma_{\text{meas},t}^2$ by solving the unregularized optimization problem exactly one additional time with the globally perturbed observation. This reduces the sensitivity mapping from $\mathcal{O}(\text{dim}(\overline{y}_t))$ to $\mathcal{O}(1)$ complexity, ensuring tractability of our online update. The complete proposed framework is illustrated in Fig. \ref{fig_complete_op}.
\begin{center}
    \includegraphics[
        width=0.9\textwidth,
        height=0.38\textheight,
        keepaspectratio
    ]{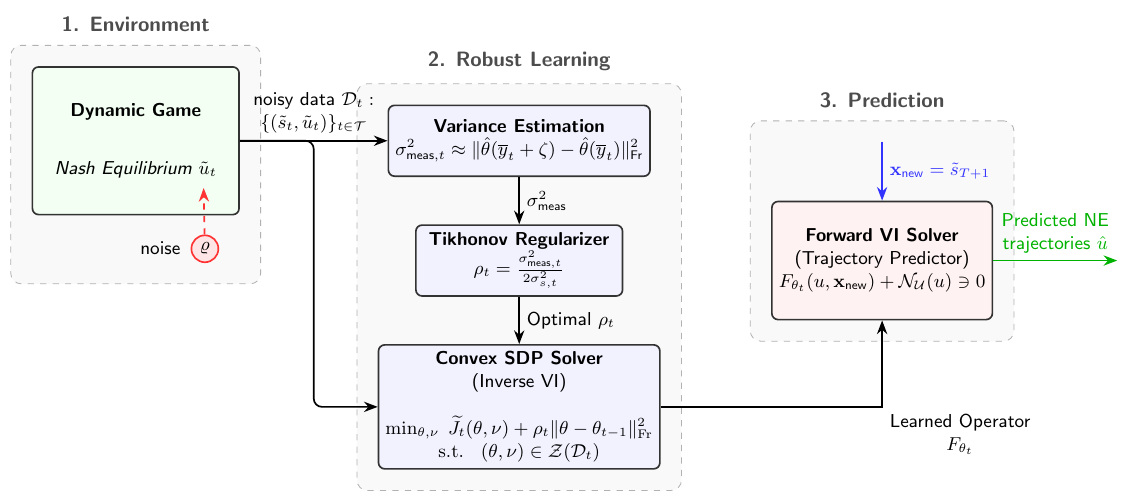}

    \captionof{figure}{Schematic of proposed robust receding horizon learning methodology: from observations, to learning the game operator, to predicting the Nash equilibrium trajectory. }
    \label{fig_complete_op}
\end{center}

\section{Numerical Experiments}\label{numerical_sec}
We present two numerical examples and a stylized application. Forward VIs for data generation and out-of-sample testing are solved using aGRAAL \cite{malitsky2020golden}, while the inverse SDPs are solved with MOSEK in MATLAB.
\subsection{Quadratic and non-quadratic monotone games}
We first establish the ground-truth forward games for both instances, considering an $N=2$ player setup with $n_i=3$ decision variables per player. For the quadratic game objective in \eqref{obj_qg}, the parameters $R^i$ and $r^i$ are generated randomly such that $(R^i, r^i) \in \Omega$ for all $i \in \mathcal{I}$. For the non-quadratic game, the objective function is $J^i(u^i,u^{-i} \mid s) := \frac{1}{2} (u^i)^{\top} \Big(R^i \sum_{j=1}^N u^j\Big) + s(r^i)^{\top} u^i + \|u^i\|^4_2, \; \forall i \in \mathcal{I},$ where $R^i$ and $r^i$ are similarly generated. Both games operate over randomly generated polytopic feasibility sets. To compile the training datasets, we compute the forward game equilibria for $T=50$ and $T=80$ samples for the quadratic and non-quadratic games, respectively. For out-of-sample validation, we evaluate the learned models over 10 independent test episodes, consisting of $T'=50$ testing samples for the quadratic game and $T'=80$ for the non-quadratic game.

\begin{figure}
    \centering
    \begin{subfigure}{0.47\linewidth}
        \centering
        \includegraphics[width=0.9\linewidth]{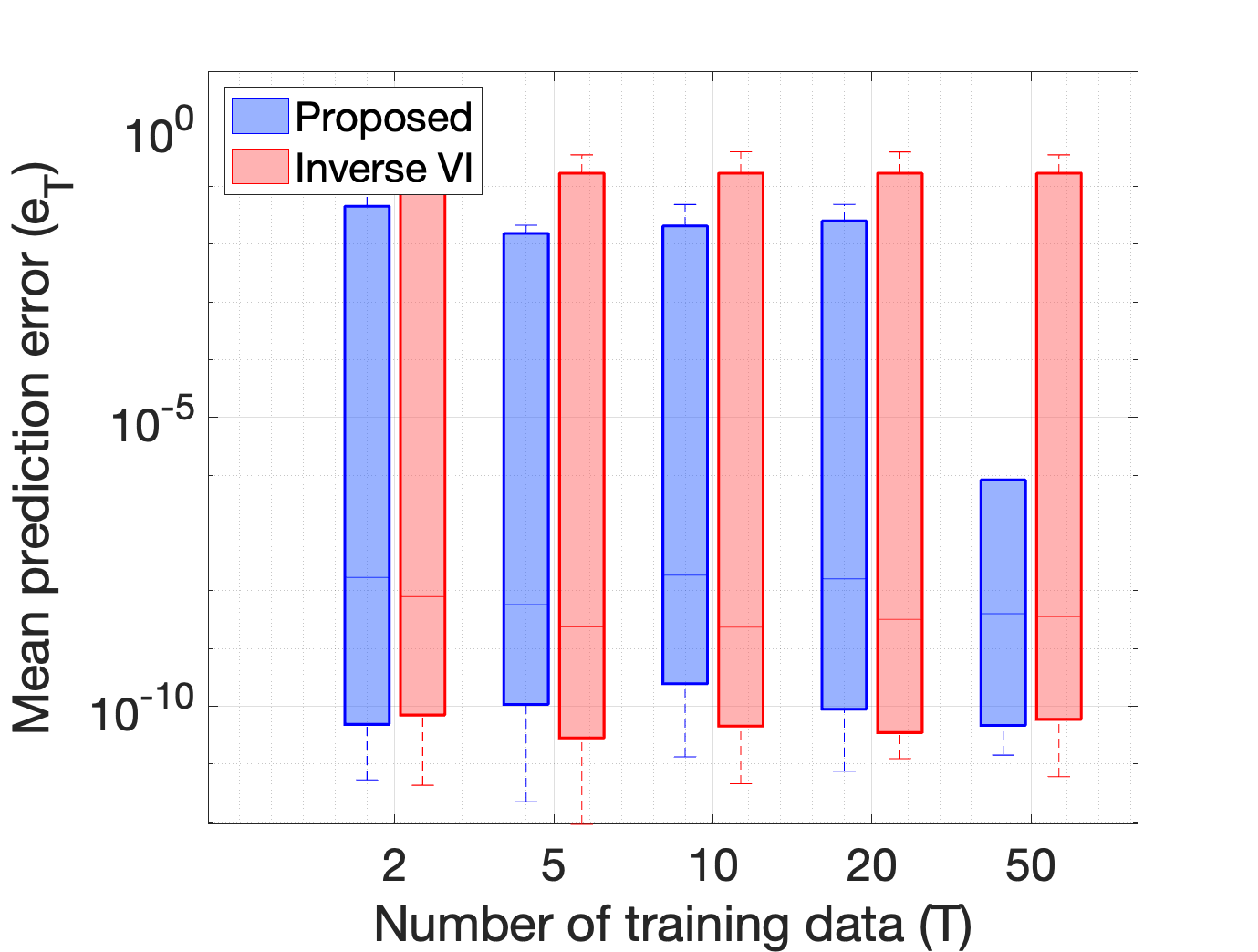}
        \caption{}
        \label{qg_fig}
    \end{subfigure}
    \begin{subfigure}{0.47\linewidth}
        \centering
        \includegraphics[width=0.9\linewidth]{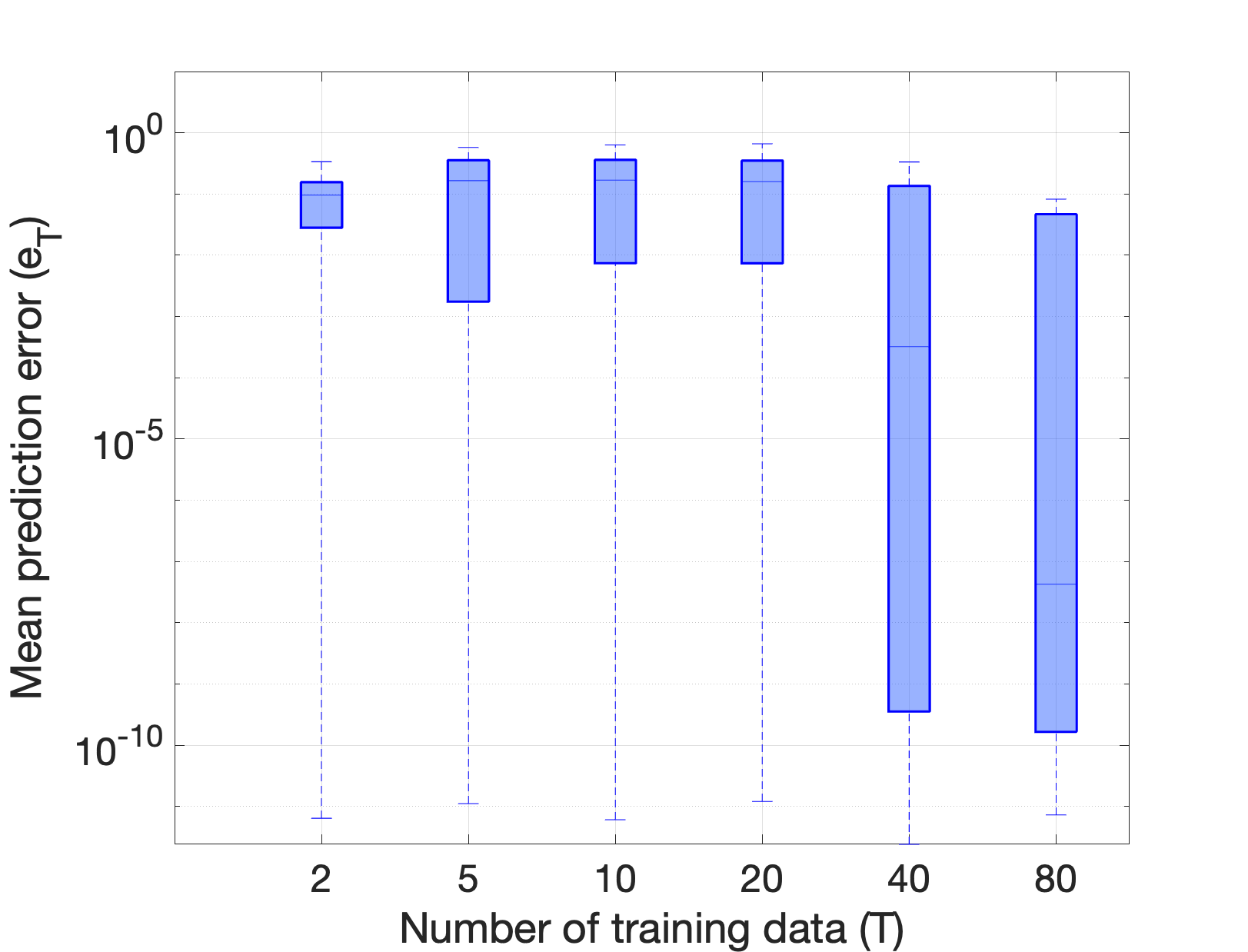} 
        \caption{}
        \label{nqg_fig}
    \end{subfigure}
    \caption{\small Out-of-sample performance of the learned game as a function of the training dataset size, evaluated over a fixed number of testing samples. (a) Comparison of the mean prediction error ($\mathbf{e}_T$ defined in \eqref{perf_criteria}) between the proposed NI loss-based method and the FO loss-based Inverse VI framework \cite{bertsimas2015data} for the quadratic game. (b) Mean prediction error for the non-quadratic monotone game.}
    \label{sim_res_fig}
\end{figure}

\textit{Discussion of results.} 
For the quadratic game setup, we benchmark our NI loss-based formulation against the FO loss-based Inverse VI approach proposed in \cite{bertsimas2015data}. Fig.~\ref{qg_fig} illustrates the decaying mean prediction error as the training dataset size $T$ increases, evaluated over a fixed validation set of $T'=50$. While the mean performance of both methods is comparable, the proposed NI formulation exhibits noticeably lower variance across the independent test episodes. Furthermore, Fig.~\ref{nqg_fig} demonstrates a sublinear decay rate for $\mathbf{e}_T$ with respect to the training sample size $T$ in the non-quadratic setting, providing strong empirical validation for the statistical bounds established in Theorem~\ref{nqg_theorem}.
\subsection{Stylized application: collision avoidance game}
We validate the linear-quadratic formulation using a 1D head-on collision avoidance game \cite{peters2023online}. Two autonomous vehicles (agents 1 and 2) navigate towards each other along a shared corridor. The joint state is $x_t = [p^1_t, v^1_t, p^2_t, v^2_t]^{\top} \in \mathbb{R}^4$, comprising positions and velocities. The control input $u^i_t \in \mathbb{R}$ is the longitudinal acceleration. For the discretized double-integrator model ($\Delta t=0.2$\,s), \eqref{LTI} has $ A=\left[\begin{smallmatrix} 1&0.2&0&0\\0&1&0&0\\0&0&1&0.2\\0&0&0&1 \end{smallmatrix}\right],\quad B^1=\left[\begin{smallmatrix}0.02\\0.2\\0\\0\end{smallmatrix}\right],\quad B^2=\left[\begin{smallmatrix}0\\0\\0.02\\0.2\end{smallmatrix}\right]. $
Control inputs are physically constrained to $\mathcal{U}^i := [-a_{\max}, a_{\max}]$ with $a_{\max} = 5.0 \text{ m/s}^2$, and the receding horizon is $T = 20$. Exact collision avoidance constraints are non-convex, destroying the strong monotonicity of the VI. To ensure a strictly convex game, we employ a surrogate safety cost. Agents track nominal velocities ($v_{\text{ref}}^i$) driving them toward the origin, penalized by deviations from a safe inter-vehicle distance ($D_s$). Assuming agent 1 is on the positive axis and agent 2 on the negative, their distance is $(p^1_k - p^2_k)$. The local costs are $J^1= \sum_{k=0}^{T-1} \frac{1}{2}\Big[w_v (v^1_k-v_{\text{ref}}^1)^2+w_s(p^1_k-p^2_k-D_s)^2+R^1 (u^1_k)^2\Big],J^2=\sum_{k=0}^{T-1} \frac{1}{2}\Big[w_v (v^2_kv_{\text{ref}}^2)^2+w_s (p^1_k-p^2_k-D_s)^2+R^2 (u^2_k)^2\Big].$ This couples the agents via the shared spatial penalty, forcing them to negotiate braking profiles. Casting this into the quadratic form \eqref{obj_unknown}, the ground-truth parameters are
\begin{align*} 
Q^i_{\rm true} &=2w_s \left[\begin{smallmatrix} 1&0&-1&0\\ 0&0&0&0\\ -1&0&1&0\\ 0&0&0&0 \end{smallmatrix}\right] +2w_v e_{2i}e_{2i}^\top+\epsilon I,\quad i=1,2,\\ q^i_{\rm true} &=2w_sD_s[-1,0,1,0]^\top -2w_vv_{\rm ref}^i e_{2i},\quad i=1,2, 
\end{align*}
where $e_j$ denotes the $j$-th canonical basis vector of $\mathbb{R}^4$, $\epsilon = 10^{-3}$ ensures strict positive definiteness. The parameters are $v_{\text{ref}}^1 = -2.5 \text{ m/s}$, $v_{\text{ref}}^2 = 2.5 \text{ m/s}$, $D_s = 5 \text{ m}$, $w_v = 10$, and $w_s = 2$. Control effort penalties $R^1_{\text{true}} = 20$ and $R^2_{\text{true}} = 40$ reflect differing vehicle inertias.

\begin{figure}[t] 
\centering 
\begin{subfigure}[t]{0.32\linewidth} 
\centering 
\includegraphics[width=1.02\linewidth]{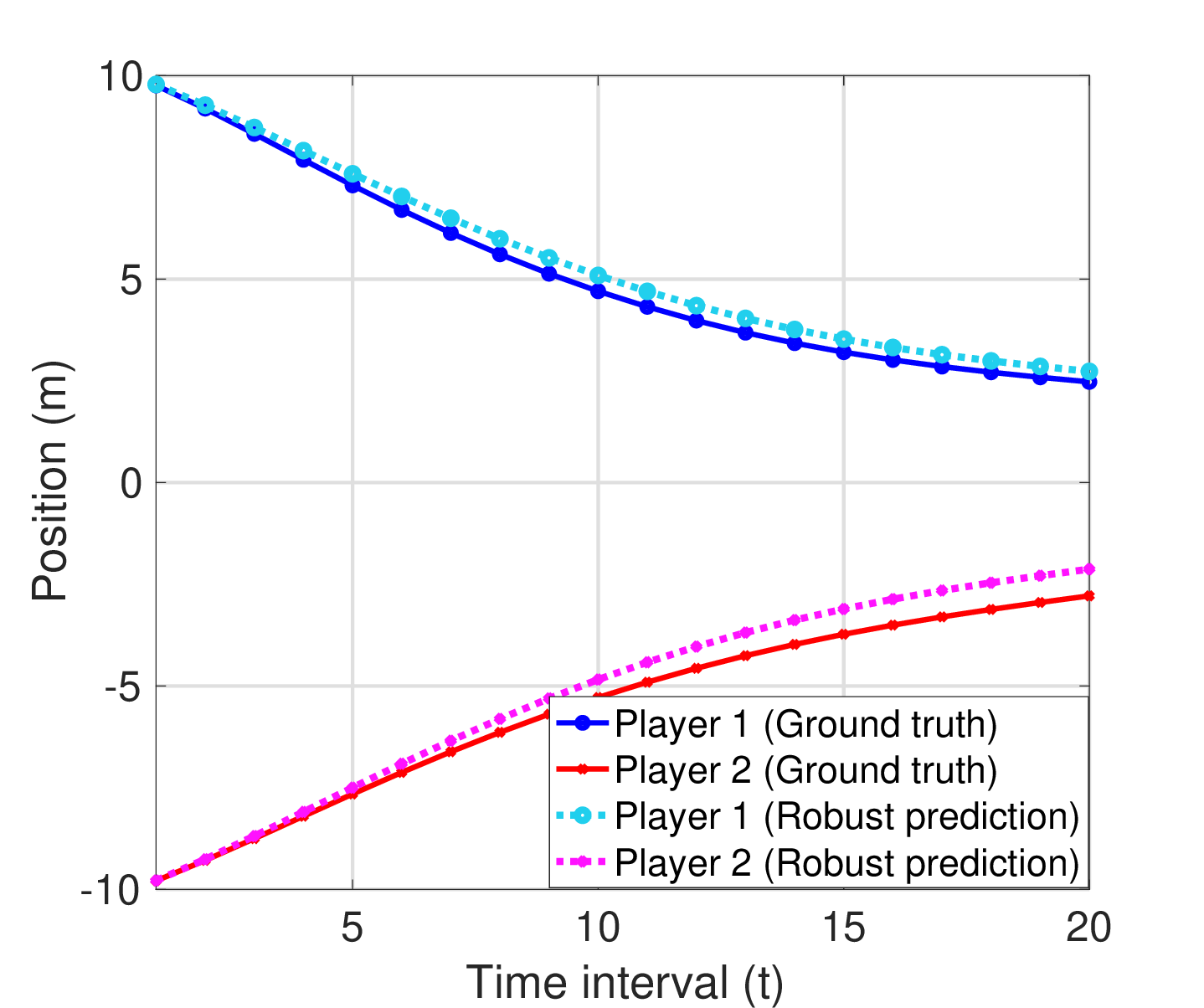} 
\caption{} \label{pos_pred} \end{subfigure} \hfill 
\begin{subfigure}[t]{0.32\linewidth} \centering \includegraphics[width=1.1\linewidth]{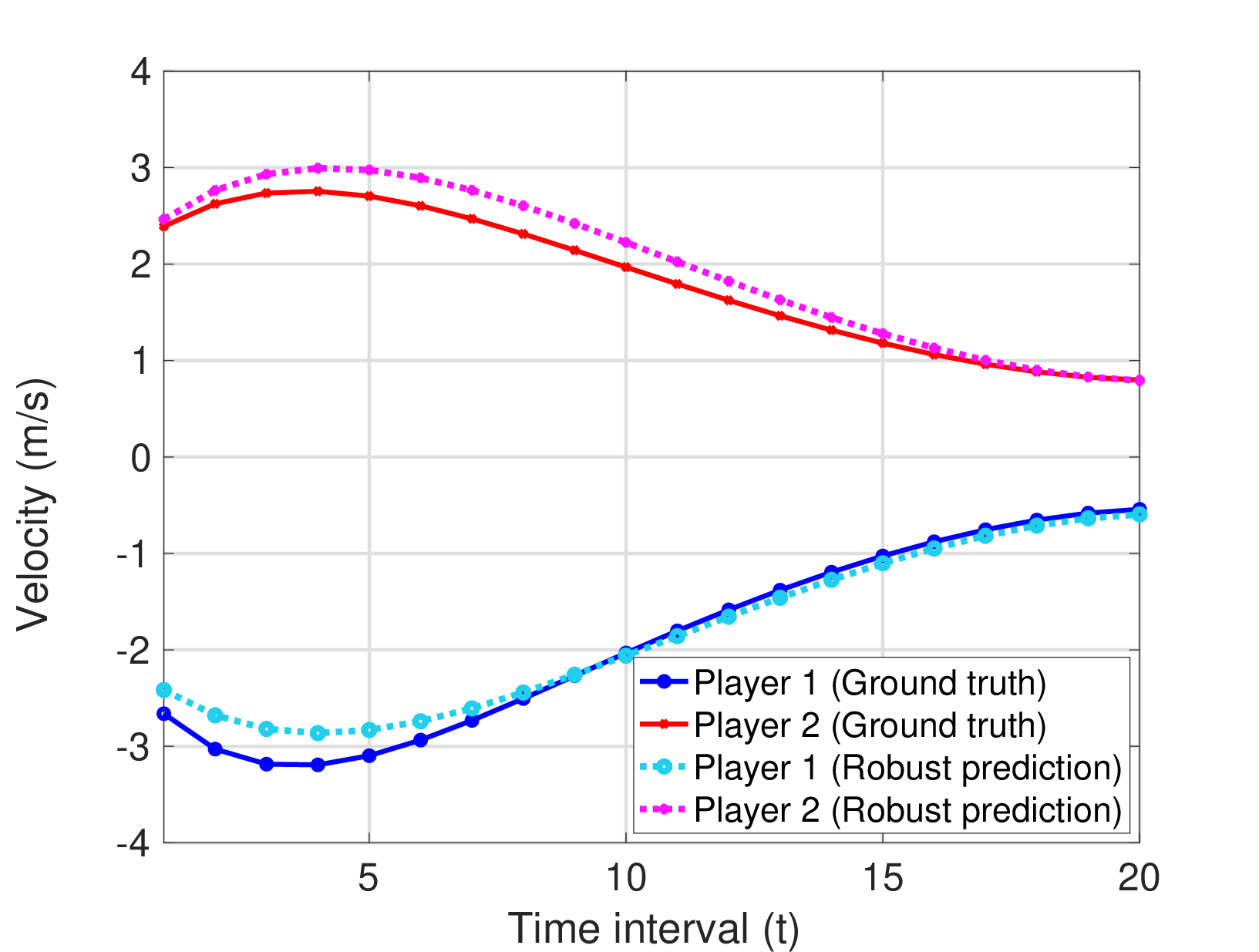} 
\caption{} \label{vel_pred} 
\end{subfigure} \hfill 
\begin{subfigure}[t]{0.32\linewidth} 
\centering 
\includegraphics[width=1.1\linewidth]{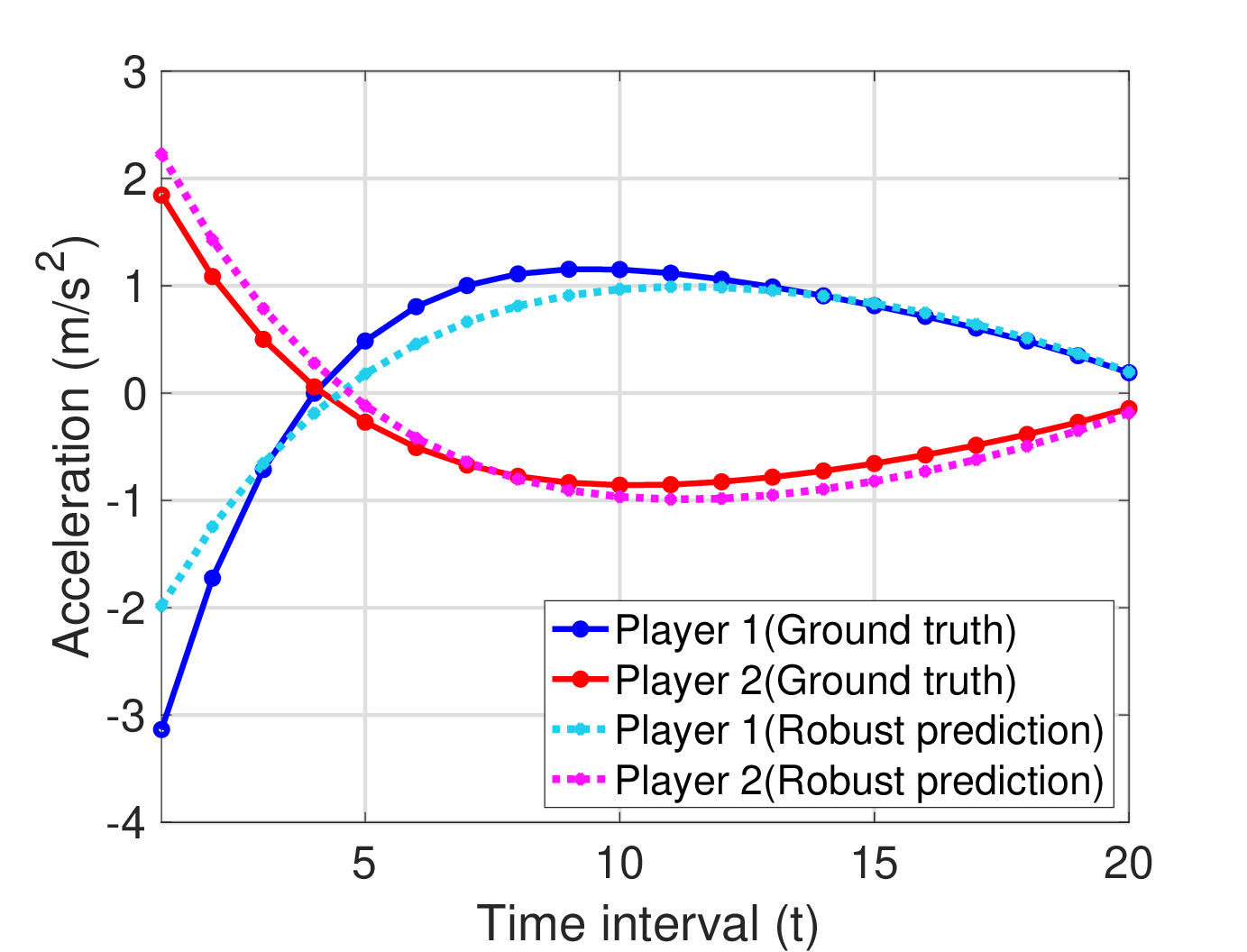} \caption{} 
\label{nash_pred} 
\end{subfigure} 
\caption{\small Closed-loop trajectory comparison of the ground truth versus the robust prediction for an unseen initial state: (a) Absolute position; (b) Velocity profiles; (c) Acceleration (Nash equilibrium control) profiles. The learned operator accurately captures the coupled braking negotiation.} 
\label{trajectory_fig} 
\end{figure}

\textit{Discussion of results.}
We train the model on a single trajectory ($T=20$) corrupted by Gaussian noise ($\sigma^2=0.05$) and solve \eqref{eq:regularized-full-problem}. Fig.~\ref{trajectory_fig} displays the predicted versus ground-truth trajectories for an unseen test episode. The robust prediction tightly tracks the ground-truth Nash equilibrium, accurately reproducing the yielding maneuver to maintain distance $D_s$ (Fig.~\ref{pos_pred}) and the asymmetric braking profiles induced by $R^1_{\text{true}} \neq R^2_{\text{true}}$ (Figs.~\ref{vel_pred}--\ref{nash_pred}). This confirms that the learned operator preserves the structural integrity of the inverse VI. To evaluate the proposed MAP-Tikhonov regularizer, we compare its Mean Absolute Error (MAE) against an unregularized inverse VI estimator over 30 test episodes. The MAE over the receding horizon $T$ is computed as $\text{MAE} = \frac{1}{ T |\mathcal{I}|} \sum_{i \in \mathcal{I}} \sum_{k=0}^{T-1} \left| \hat{u}^i_k(\boldsymbol{x}^{(j)}) - u^{i,*}_k(\boldsymbol{x}^{(j)}) \right|.$ Fig.~\ref{box_plot}(a) shows that as measurement noise ($\sigma^2$) increases, the unregularized estimator degrades significantly, exhibiting high median error and variance. Conversely, our robust estimator leverages the optimal penalty ($\rho_t$) to reject noise, maintaining a lower median MAE and tight variance bounds. Furthermore, Fig.~\ref{box_plot}(b) demonstrates the effect of horizon length ($T$). While a larger $T$ provides more data, it injects cumulative noise into the unregularized predictor. However, our MAP-Tikhonov formulation efficiently distills the extended spatio-temporal data, yielding monotonically decreasing MAE and tighter confidence intervals. This validates the efficacy of dynamic penalty allocation in practical online learning.

\begin{figure}[t]
    \centering
    \begin{subfigure}{0.48\linewidth}
        \centering
        \includegraphics[width=0.9\linewidth]{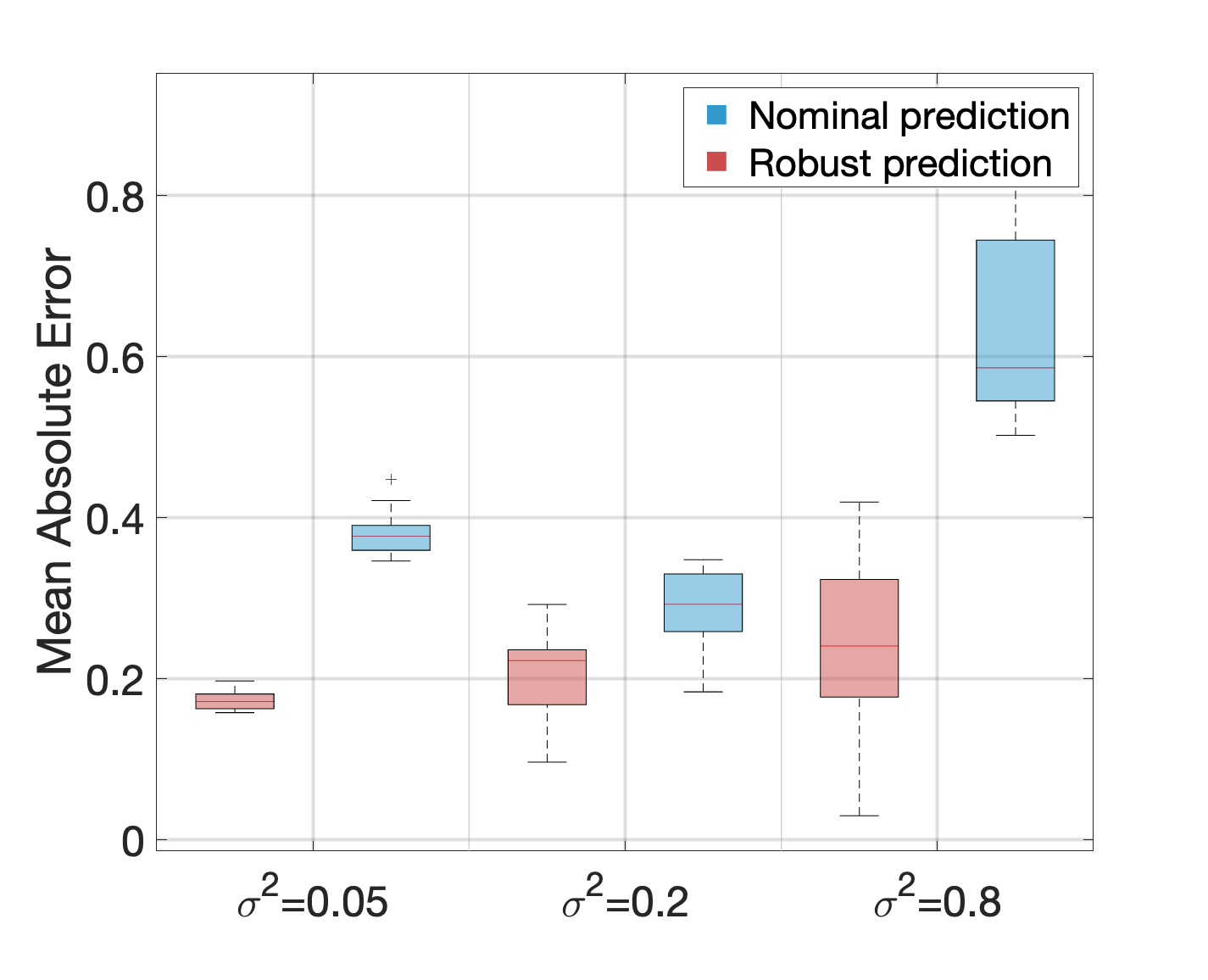}
        \caption{}
        \label{mae_var}
    \end{subfigure}
    \begin{subfigure}{0.48\linewidth}
        \centering
        \includegraphics[width=0.9\linewidth]{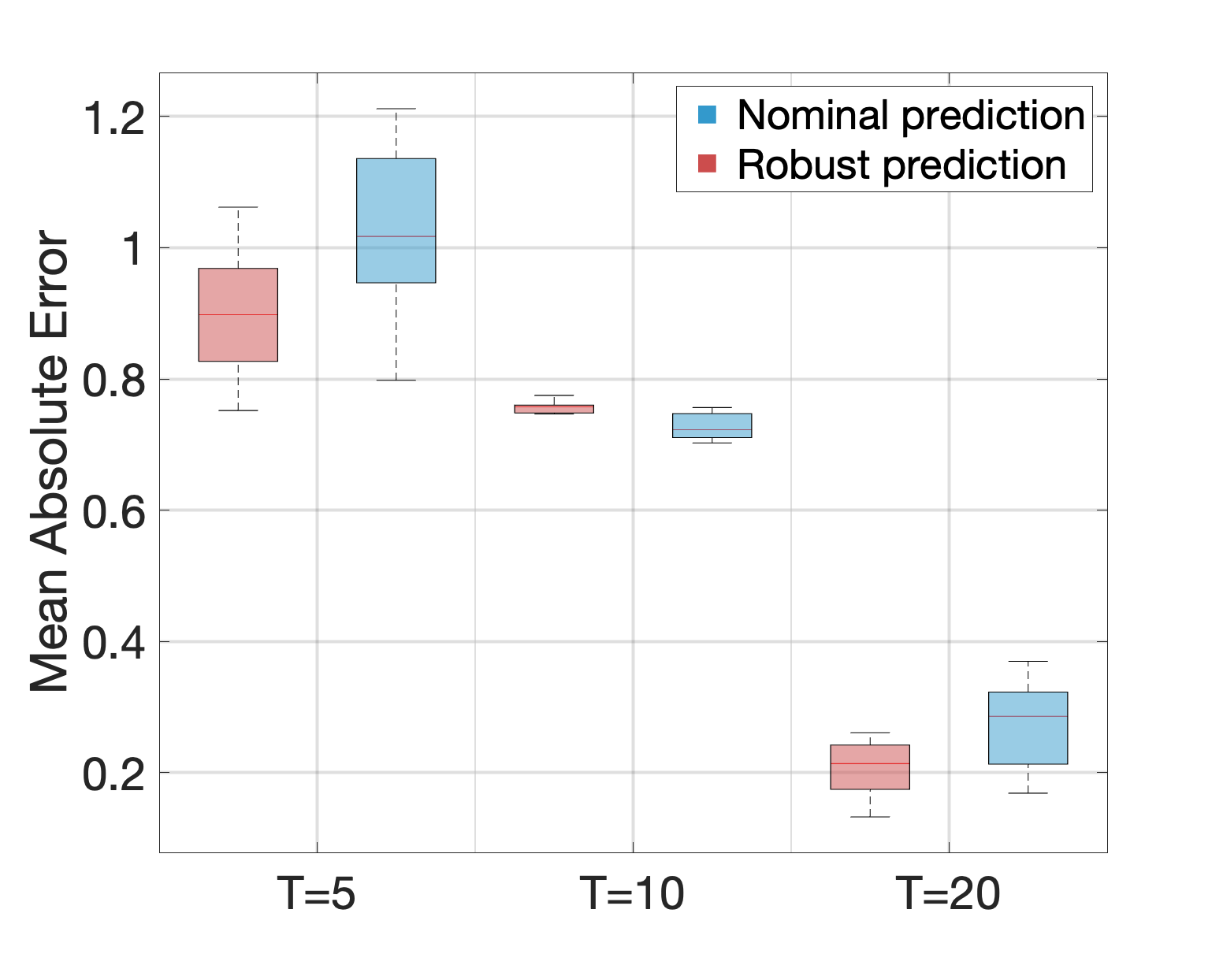} 
        \caption{}
        \label{hori_error}
    \end{subfigure}
    \caption{\small Prediction MAE over 30 test episodes: (a) MAE vs. measurement noise variance ($\sigma^2$) at fixed $T=20$; (b) MAE vs. receding horizon length ($T$) at fixed $\sigma^2 = 0.2$. The robust estimator consistently suppresses error variance, outperforming the baseline.}
    \label{box_plot}
\end{figure}

\section{Conclusion}
\label{sec:conclusion}
This article developed tractable convex formulations for inverse non-cooperative games based on Nikaido--Isoda and first-order losses. Strong monotonicity enables accurate equilibrium prediction for quadratic, non-quadratic, and linear-quadratic dynamic games with finite-sample guarantees, while a MAP-based receding-horizon estimator provides computationally efficient learning under measurement noise. Future work includes extending the offline guarantees beyond i.i.d. data to dynamically correlated trajectories, jointly inferring objectives and unknown feasibility sets, and robustifying the adaptive regularization against non-Gaussian, heavy-tailed, and adversarial disturbances.

\section{Appendix}
\subsection{Proof of Lemma \ref{loss_dominance}} \label{loss_lemma}
Given that $F_{\theta}$ is monotone, $J^i(u^i,u^{-i}\mid s)$ is convex in $u^i$. By definition, monotonicity implies that the symmetric Jacobian is positive semidefinite: $(\mathbf{J}_{F_{\theta}}+ \mathbf{J}_{F_{\theta}}^{\top})\succeq 0 $, from which the condition $\nabla^2_{u^i u^i}J^i(u^i,u^{-i}\mid s)\succeq 0$ directly follows. Thus, we state the first-order convexity inequality $J^i(u^i,u^{-i} \mid s) -J^i(y^i,u^{-i}\mid s)\leq \langle\nabla_{u^i}J^i(u^i,u^{-i}\mid s),(u^i-y^i)\rangle, \forall\: s \in \mathbb{S},\;y^i,u^i \in \mathcal{U}^i(s).$ Taking the supremum over $y^i$ on both sides yields $ J^i(u^i,u^{-i} \mid s) - \inf_{y^i \in \mathcal{U}^i(s)} J^i(y^i,u^{-i} \mid s) \leq\sup_{y^i \in \mathcal{U}^i(s)} \langle\nabla_{u^i}J^i(u^i,u^{-i}\mid s),(u^i-y^i)\rangle.$ Summing over all agents $i \in \mathcal{I}$, we recover $L^{\mathrm{NI}}_{\theta}(s,u) \leq L^{\mathrm{FO}}_{\theta}(s,u) \;\forall s \in \mathbb{S}, \; u \in \mathcal{U}(s),$ which concludes the proof. \hfill $\blacksquare$

\subsection{Proof of Proposition \ref{prop_qg} }\label{proof_prop_qg}
For brevity, we analyze the objective \eqref{ig_main} at a single time step $T=1$ (dropping the subscript $t$), focusing on the generic $i$-th agent. The inner minimization block of the NI loss evaluates as
\begin{align} \label{iner_min}
    \min_{y^i\in \mathcal{U}^i(\tilde{s}^i)} \quad &\frac{1}{2}(y^i)^{\top}R^i_{\theta}y^i + (\frac{1}{2}R^i_{\theta} \hat{u}^{-i} + r^i_{\theta}(\tilde{s}))^{\top} y^i \nonumber \\ 
    = \max_{\lambda^i\geq 0} \:\min_{y^i} \quad & \big[ \frac{1}{2}(y^i)^{\top}R^i_{\theta}y^i + (\frac{1}{2}R^i_{\theta} \hat{u}^{-i} + r^i_{\theta}(\tilde{s}))^{\top} y^i \big] + 
    (\lambda^i)^{\top} (G_i(\tilde{s})y^i-h_i(\tilde{s})).
\end{align}
Solving the inner unconstrained quadratic program over $y^i$ yields the algebraic root $y^i= -\frac{1}{2}(\hat{u}^{-i}+2(R^i_{\theta})^{-1}r^i_{\theta}(\tilde{s}) +2(R^i_{\theta})^{-1}G_i(\tilde{s})^{\top}\lambda^i )$. Substituting $y^i$ back into \eqref{iner_min} consolidates to $ -\frac{1}{2}\Big((\hat{u}^{-i})^{\top}R^i_{\theta}\hat{u}^{-i} +  (\hat{u}^{-i})^{\top}r^i_{\theta}(\tilde{s})\Big) -\min_{\lambda^i \geq 0} \frac{1}{2} \big[ (r^i_{\theta}(\tilde{s})+G_i(\tilde{s})^{\top}\lambda^i)(R^i_{\theta})^{-1}(r^i_{\theta}(\tilde{s})+G_i(\tilde{s})^{\top}\lambda^i) + (\hat{u}^{-i})^{\top}G_i(\tilde{s})^{\top}\lambda^i + 2(\lambda^i)^{\top}h_i(\tilde{s}) \big].$
We decouple this using an epigraph auxiliary variable $\gamma^i\in \mathbb{R}$, yielding the bounding constraint
\begin{align}\label{last_eq}
     (r^i_{\theta}(\tilde{s})+G_i(\tilde{s})^{\top}\lambda^i)(R^i_{\theta})^{-1}(r^i_{\theta}(\tilde{s})+G_i(\tilde{s})^{\top}\lambda^i) +  
     (\lambda^i)^{\top}(2h_i(\tilde{s})+G_i(\tilde{s})\hat{u}^{-i})-2\gamma^i \leq 0.
\end{align}
Because $R^i_{\theta}\succ 0$, \eqref{last_eq} is Schur-complement equivalent to the rigorous LMI block shown in \eqref{main_qg_prob}. Extending this globally across $[T]$ finalizes the proof. \hfill $\blacksquare$

\subsection{Proof of Theorem \ref{kernel_formulation}}\label{proof_prop_kernel}
Following the argument of the representation theorem in \cite[Theorem $8.7$]{paulsen2016introduction}, we can represent $\nabla_u^2g_\theta$ in \eqref{erm_kernel} with a sum of finite kernels as $\nabla_u^2g_\theta(u)= \sum_{t=1}^T\alpha_\theta^{t} \nabla^2_u k(u_t,u)$. This is because $\nabla^2_uk(u,\cdot)$ is a valid positive definite kernel. Furthermore, the RKHS-norm on $\nabla_u^2g_\theta$ is given by $\lVert \nabla_u^2g_\theta \rVert_{\mathcal{H}_k}^2= \alpha_\theta^{\top}K\alpha_\theta$, where $K$ is the Gram matrix having the elementwise component as $[K]_{ij}= \nabla_x\nabla_y^\top k(x,y)\mid_{(\tilde{u}_i,\tilde{u}_j)}= 2\gamma_k\Big(I-2\gamma_k(\tilde{u}_i-\tilde{u}_j)(\tilde{u}_i-\tilde{u}_j)^\top\Big)k(\tilde{u}_i,\tilde{u}_j)$, and $\alpha_\theta=\mathrm{col}((\alpha_\theta^j)_{j \in [T]})$. The FO loss constraint \eqref{fo_loss_defn} translates cleanly through strong convex duality. Bounding $\epsilon_t \geq L^{\mathrm{FO}}_{\theta}$, we deduce
\begin{align} \label{duality_eq}
        F_{\theta}(\tilde{s}_t,\tilde{u}_t)^{\top}\tilde{u}_t - \epsilon_t  &\leq \min_{u_t \in \mathcal{U}(\tilde{s})} \: F_{\theta}(\tilde{s}_t,\tilde{u}_t)^{\top}u_t\\ \nonumber
        &= \max_{\lambda_t \geq 0} \min_{u_t} \: \big[ F_{\theta}(\tilde{s}_t,\tilde{u}_t)^{\top}u_t + \lambda_t^{\top}(G(\tilde{s}_t)u_t-h(\tilde{s}_t)) \big] 
        = \max_{\substack{\lambda_t \geq 0 \\F_{\theta}(\tilde{s}_t,\tilde{u}_t)\\+G^{\top}(\tilde{s}_t)\lambda_t=0 }} \: -\lambda_t^{\top}h(\tilde{s}_t).
\end{align}
The remaining obstacle resides in the strong convexity constraint on the infinite-dimensional scalar field: $\nabla^2_u \phi_{\theta}(u)\succeq \lambda_{g}I_n$, mapping to $(\theta_R-\lambda_g)I_n \succeq -\nabla^2_{u} g_{\theta}(u)$. To bound the spectral norm strictly, we utilize the structural properties of RKHS, where derivative operators function as bounded linear maps $\sup_{u} \;\| \nabla^2_{u} g_{\theta}(u)\|_{\mathrm{op}} \leq C_k \|g_{\theta}\|_{\mathcal{H}},$ hence $(\theta_R-\lambda_g) \geq C_k \|g_{\theta}\|_{\mathcal{H}}$ is a sufficient condition. By direct substitution of the canonical RKHS norm $\|g_{\theta}\|_{\mathcal{H}}= \sqrt{\alpha_{\theta}^{\top}K_g\alpha_{\theta}}$, we deduce $(\theta_R-\lambda_g)\geq C_k\sqrt{\alpha_{\theta}^{\top}K_g\alpha_{\theta}}$, establishing an exact convex proxy. \hfill $\blacksquare$

\subsection{ Auxiliary Lemma} \label{lemma_ubh_prof}
\begin{lem}[Rademachar complexity] \label{ub_H}
    Define $D:=\sup_{u,v\in \mathcal{U}(\mathbf{s})}\|u-v\|$. Given the function class $\mathfrak{H}$ in \eqref{main_fc}, the empirical Rademacher complexity $\hat{R}_T(\mathfrak{H})$ is upper-bounded as 
    \begin{align}
        \hat{R}_T(\mathfrak{H}) \leq \frac{2(G+MD)(\lambda_R-\lambda_g)}{C_kT} \sqrt{\mathrm{Tr}(K)}.
    \end{align}
\end{lem}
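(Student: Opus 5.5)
We bound $\hat{R}_T(\mathfrak{H})$ by peeling off the supremum over $y$ in the FO loss, so that only the pseudogradient class $\mathfrak{F}$ has to be controlled; a vector-contraction step for the resulting bilinear map, followed by a Cauchy--Schwarz/RKHS argument, then gives the $\sqrt{\mathrm{Tr}(K)}/T$ rate.

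\textbf{Step 1: Lipschitz reduction from $\mathfrak{H}$ to $\mathfrak{F}$.} For fixed $(\tilde s,\tilde u)$, write $\xi(f,\tilde s,\tilde u)=\sup_{y\in\mathcal{U}(\tilde s)}\langle f(\tilde s,\tilde u),\tilde u-y\rangle$. For two operators $f,f'$ in $\mathfrak{F}$,
\[
|\xi(f)-\xi(f')| \le \sup_{y}|\langle f-f',\tilde u-y\rangle| \le D\,\|f(\tilde s,\tilde u)-f'(\tilde s,\tilde u)\|.
\]
So $\xi$ is $D$-Lipschitz in the value of $f$. The same computation gives $0\le \xi\le GD$ once we have $\|f(\tilde s,\tilde u)\|\le G$ on the support. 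That bound comes from the structure $f=A_\theta\tilde u+\sum_j\alpha^j_\theta D_{jt}$, using $\|\tilde u\|\le V$ and $\sup_r 2\gamma_k r e^{-\gamma_k r^2}=\sqrt{2\gamma_k/e}$. This is where the constant $G$ and the range value $V_m$ used in the proof of Theorem~\ref{nqg_theorem} originate.

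\textbf{Step 2: vector contraction.} Since $f$ is vector-valued, we cannot use scalar Talagrand contraction directly. Instead we apply Maurer's vector-contraction inequality, which bounds $\hat{R}_T(\mathfrak{H})$ by $\sqrt{2}D$ times the Rademacher average of the coordinates of $f$. Keeping an additive $G$ term, we aim for
\[
\hat{R}_T(\mathfrak{H})\le 2(G+MD)\,\hat{R}_T(\mathfrak{F}_{\rm lin}),
\]
where $M$ is a Lipschitz constant of $u\mapsto f(\theta,s,u)$. The constant $M$ collects $\|A_\theta\|$ plus the Lipschitz constant of $\nabla_u k$, which is $4\gamma_k C_\gamma\sum_j|\alpha^j_\theta|$. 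The reason for this split is that the loss depends on $\tilde u$ through both $f$ and the term $\langle f,\tilde u\rangle$. We handle this by viewing $h$ as a product of the bounded factor $\tilde u-y$ and $f$, and applying the product rule with sup-bound $G$ and Lipschitz bound $M$ over diameter $D$.

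\textbf{Step 3: RKHS Rademacher bound.} For the parametric part, $\hat R_T(\mathfrak{F}_{\rm lin})$ is controlled by the kernel part of the class. The constraint $(\theta_R-\lambda_g)\ge C_k\|g_\theta\|_{\mathcal H}$ combined with $\theta_R\le\lambda_R$ gives $\|g_\theta\|_{\mathcal H}\le(\lambda_R-\lambda_g)/C_k$. By the reproducing property, $\nabla_u g_\theta(u)=\langle g_\theta,\nabla_u k(u,\cdot)\rangle$. Cauchy--Schwarz and Jensen then give
\[
\mathbb{E}_\epsilon\sup\frac1T\sum_t\epsilon_t\nabla g_\theta(\tilde u_t)\le\frac{\lambda_R-\lambda_g}{C_kT}\sqrt{\textstyle\sum_t\mathrm{Tr}\,\nabla_x\nabla_y^\top k(\tilde u_t,\tilde u_t)}=\frac{\lambda_R-\lambda_g}{C_kT}\sqrt{\mathrm{Tr}(K)}.
\]
Multiplying by the constant from Step 2 yields the claim.

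The main obstacle is Step 2: justifying the factor $2(G+MD)$ rigorously, and absorbing the finite-dimensional parts ($\theta_R I$ and $\varphi(s)S_\theta$) of the class without adding a separate term. I would try to treat these parts as fixed at the learned $\hat\theta$, or bound them inside the constants $G$ and $M$ that already carry $\|A_\theta\|$. The price is that the bound is stated in terms of $\theta$-dependent constants.
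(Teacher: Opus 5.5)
Your Step 1 is correct, but Step 2 is a genuine gap, and the mechanism you sketch cannot close it. In $\hat{R}_T(\mathfrak{H})$ the points $(\tilde s_t,\tilde u_t)$ are fixed and the supremum runs over the class. Any contraction inequality (Ledoux--Talagrand, Maurer's vector version, or \cite[Proposition~4.3]{bach2024learning}) therefore only uses the Lipschitz modulus of $v\mapsto\sup_{y\in\mathcal{U}(\tilde s_t)}\langle v,\tilde u_t-y\rangle$ in the output $v=f(\tilde s_t,\tilde u_t)$. You correctly found this modulus to be $D$. Maurer then gives $\hat{R}_T(\mathfrak{H})\le\frac{\sqrt{2}D}{T}\,\mathbb{E}_\epsilon\sup_{f}\sum_{t,k}\epsilon_{tk}f_k(\tilde s_t,\tilde u_t)$, and $G$ and $M$ never enter. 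This constant is not in general dominated by $2(G+MD)$, which can be as small as $2\lambda_g(V+D)$ (take $\alpha_\theta=0$, $S_\theta=0$, $\theta_R=\lambda_g$).

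The quantity $G+MD$ is instead the Lipschitz constant of $\tilde u\mapsto L^{\mathrm{FO}}_\theta(\tilde s,\tilde u)$ at fixed $\theta$. It comes from $|\langle F_\theta(x),x-z\rangle|\le G\|x-z\|$ and $|\langle F_\theta(x)-F_\theta(z),z-y\rangle|\le MD\|x-z\|$. Your ``product rule'' is computing this, and it measures variation across samples, not across the class. This is exactly how the paper obtains the constant. It derives $L_l=G+MD$ this way and writes $\hat{R}_T(\mathfrak{H})\le L_l\hat{R}_T(\mathfrak{F})$, citing \cite[Proposition~4.3]{bach2024learning}. The factor $2$ comes from \cite[Lemma~22]{bartlett2002rademacher} applied to $\mathfrak{F}$, not from the contraction step as you assumed. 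So the paper does not supply the justification you are missing; the identification of the two Lipschitz constants is the weak point there too. Also, $G$ and $M$ depend on $\theta$ through $\|A_\theta\|$ and $\sum_j|\alpha^j_\theta|$, whereas a contraction constant has to be uniform over the class.

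The finite-dimensional components are a second gap, and fixing them at $\hat\theta$ is not admissible. \eqref{exp_bound} is applied in Theorem~\ref{nqg_theorem} at the data-dependent $\hat\theta$, so it must hold uniformly over the class. Uniformly over $\Omega_F$, the part $\theta_R\tilde u_t$ contributes an extra term bounded by $\lambda_R\sqrt{\sum_t\|\tilde u_t\|^2}/T\le\lambda_R V/\sqrt{T}$. This is of the same order as $\sqrt{\mathrm{Tr}(K)}/T$ but not contained in it. The part $\varphi(\tilde s_t)S_\theta\tilde u_t$ has unbounded Rademacher average, because $\Omega_F$ places no norm bound on $S_\theta$. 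The paper's proof passes over this by applying \cite[Lemma~22]{bartlett2002rademacher} to all of $\mathfrak{F}$, which covers only the RKHS component.

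For that RKHS component your Step 3 is correct, with two fixes. First, the signs must be taken coordinatewise, $\epsilon_{tk}$, as Maurer's inequality requires; the supremum of the vector-valued sum you wrote is not defined. Second, the class must be defined through the norm constraint $(\theta_R-\lambda_g)\ge C_k\|g_\theta\|_{\mathcal{H}}$ from \eqref{erm_kernel_prop}, not the pointwise Hessian constraint in the definition of $\mathfrak{F}$, which does not bound $\|g_\theta\|_{\mathcal{H}}$. Carried out rigorously, your route gives $\sqrt{2}D$ times the sum of three terms: $\frac{\lambda_R-\lambda_g}{C_kT}\sqrt{\mathrm{Tr}(K)}$, a term of order $\lambda_R V/\sqrt{T}$, and a term that needs an added bound on $\|S_\theta\|$. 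That is not the inequality as stated.
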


   \textbf{Proof.} The first step is to relate Rademacher complexities of function classes $\mathfrak{H}$ and $\mathfrak{F}$. To do it, we need to know the existence of a Lipschitz constant of the FO loss $L^{\mathrm{FO}}_{\theta}$ (i.e., $\xi$ in the proof of Theorem \ref{nqg_theorem}). 

First, $F_\theta(x)-F_\theta(z)
=
A_\theta(x-z)
+
\sum_{j=1}^T
\alpha_\theta^j
\bigl(D_{jt}(x)-D_{jt}(z)\bigr).$
Since $u\mapsto ue^{-\gamma_k\|u\|^2}$ has bounded derivative, there exists \(C_\gamma>0\) such that $\|D_{jt}(x)-D_{jt}(z)\|
\le
4\gamma_k C_\gamma\|x-z\|.$ Hence $\|F_\theta(x)-F_\theta(z)\|
\le
\left(
\|A_\theta\|
+
4\gamma_k C_\gamma
\sum_{j=1}^T|\alpha_\theta^j|
\right)\|x-z\|,$ which gives $\|F_\theta(x)-F_\theta(z)\|
\le
M\|x-z\|.$ Next, $\|F_\theta(x)\|
\le
\|A_\theta\|\,\|x\|
+
\sum_{j=1}^T
|\alpha_\theta^j|
\|D_{jt}(x)\|.$ Since $\sup_{u\ge0}ue^{-\gamma_ku^2}
=
\frac{1}{\sqrt{2e\gamma_k}},$ we obtain $\|D_{jt}(x)\|
\le
\sqrt{\frac{2\gamma_k}{e}}.$ Therefore, using initial assumption of Theorem~\ref{nqg_theorem}, $\|F_\theta(x)\|
\le
\|A_\theta\|V
+
\sqrt{\frac{2\gamma_k}{e}}
\sum_{j=1}^T|\alpha_\theta^j|,$ and thus $\sup_{x\in \mathcal{U}(\mathbf{s})}\|F_\theta(x)\|
\le G$, where we denote upper-bound with $G$. Finally, for any \(x,z\in \mathcal{U}(\mathbf{s})\),
\[
\begin{aligned}
|L_{\theta}^{\mathrm{FO}}(x)
-
L_{\theta}^{\mathrm{FO}}(z)| & \le \sup_{y\in \mathcal{U}(\mathbf{s})} \left| \langle F_\theta(x),x-y\rangle - \langle F_\theta(z),z-y\rangle \right|
\\
&\le
\sup_{y\in \mathcal{U}(\mathbf{s})}
\left(
|\langle F_\theta(x),x-z\rangle|
+
|\langle F_\theta(x)-F_\theta(z),z-y\rangle|
\right)
\le
G\|x-z\|
+
MD\|x-z\|.
\end{aligned}
\]
Hence, $|L_{\theta}^{\mathrm{FO}}(x)
-
L_{\theta}^{\mathrm{FO}}(z)|
\le
(G+MD)\|x-z\|.$ Therefore, \(L_{\theta}^{\mathrm{FO}}\) is Lipschitz continuous on $\mathcal{U}(\mathbf{s})$, with Lipschitz constant $L_l= G+MD$. Now replacing $L^{\mathrm{FO}}_{\theta}$ with $\xi$, we invoke~\cite[Proposition $4.3$]{bach2024learning} to state the following result:
\begin{align} \label{bach_eq}
    \hat{R}_T(\mathfrak{H}) \leq L_l \hat{R}_T(\mathfrak{F})
\end{align}
The remaining part is to upper bound $\hat{R}_T(\mathfrak{F})$. For this, we use~\cite[Lemma $22$]{bartlett2002rademacher} to state $\hat{R}_T(\mathfrak{F}) \leq \frac{2(\lambda_R-\lambda_g)}{C_kT} \sqrt{\mathrm{Tr}(K)}$. Replacing it in \eqref{bach_eq} concludes the proof. \hfill $\blacksquare$ 
\subsection{Proof of Proposition~\ref{prop:MAP-equivalence}} \label{map_prop_proof}
By the differentiability of the optimizer mapping in Assumption~\ref{ass:local-MAP}(i), a first-order expansion around the noise-free observation gives $ \widehat\theta_t = \mathcal A_t(\bar y_t+\varrho_t)= \mathcal A_t(\bar y_t) + S_t\varrho_t + o_{\mathbb P}(\|\varrho_t\|). $ Since $\mathcal A_t(\bar y_t)=\theta_t^\star$ and $\varrho_t\sim\mathcal N(0,\sigma_m^2I)$, it follows that $ \widehat\theta_t \mid\theta_t^\star \;\dot{\sim}\; \mathcal N \left( \theta_t^\star, \sigma_m^2S_tS_t^\top \right), $ where $\dot{\sim}$ denotes equality under the local first-order approximation. Using the isotropic approximation in \eqref{eq:scalar-measurement-variance}, the likelihood is locally approximated by $ p(\widehat\theta_t\mid\theta_t^\star) \propto \exp\left( -\frac{1}{2\sigma_{\mathrm{meas},t}^2} \|\widehat\theta_t-\theta_t^\star\|_{\mathrm{Fr}}^2 \right). $ Moreover, the Gaussian transition model in Assumption~\ref{ass:local-MAP}(iv) gives $ p(\theta_t^\star\mid\theta_{t-1}) \propto \exp\left( -\frac{1}{2\sigma_{s,t}^2} \|\theta_t^\star-\theta_{t-1}\|_{\mathrm{Fr}}^2 \right). $ By conditional independence and Bayes' rule, $  p( \theta_t^\star \mid \widehat\theta_t,\theta_{t-1} ) \propto p(\widehat\theta_t\mid\theta_t^\star) p(\theta_t^\star\mid\theta_{t-1}).$ Therefore, the local MAP estimator in~\eqref{map_formula} satisfies 
\begin{equation} \label{eq:MAP-quadratic} 
\begin{aligned} \theta_t^{\mathrm{MAP}} \in \arg\min_{\theta\in\Theta_t} \bigg\{ & \frac{1}{2\sigma_{\mathrm{meas},t}^2} \|\theta-\widehat\theta_t\|_{\mathrm{Fr}}^2 
+ \frac{1}{2\sigma_{s,t}^2} \|\theta-\theta_{t-1}\|_{\mathrm{Fr}}^2 \bigg\}. 
\end{aligned} 
\end{equation}
By Assumption~\ref{ass:local-MAP}(iii), $ \frac{1}{2} \|\theta-\widehat\theta_t\|_{\mathrm{Fr}}^2 = \widetilde{\mathcal L}_t(\theta) - \widetilde{\mathcal L}_t(\widehat\theta_t) + o\!\left( \|\theta-\widehat\theta_t\|_{\mathrm{Fr}}^2 \right). $ The term $\widetilde{\mathcal L}_t(\widehat\theta_t)$ is independent of $\theta$ and can be discarded. Consequently, up to the stated second-order remainder, the MAP problem in~\eqref{eq:MAP-quadratic} becomes $ \arg\min_{\theta\in\Theta_t} \left\{ \widetilde{\mathcal L}_t(\theta) + \frac{ \sigma_{\mathrm{meas},t}^2 }{ 2\sigma_{s,t}^2 } \|\theta-\theta_{t-1}\|_{\mathrm{Fr}}^2 \right\}, $ which coincides locally with \eqref{eq:regularized-profiled-problem} under the update rule \eqref{eq:adaptive-rho}. \hfill $\blacksquare$

\bibliographystyle{apalike} 
\begin{small}
\bibliography{ref}
\end{small}

\newpage

\appendix

\end{document}